\newtheorem{thm}{Theorem}
\newtheorem{remark}{Remark}
\newtheorem{lemma}{Lemma}
\newtheorem{definition}{Definition}
\newtheorem{Observation}{Observation}
\newtheorem{prop}{Proposition}
\newtheorem{claim}{Claim}
\title{Algorithmic Aspects of Some Variants of Domination in Graphs}
\author{J. Pavan Kumar and P.Venkata Subba Reddy\\
	Department of Computer Science and Engineering \\ 
	National Institute of Technology \\ 
	Warangal, Telangana 506004, India. \\
	jp.nitw@gmail.com, pvsr@nitw.ac.in}
\date{}
\begin{document}
\maketitle

\begin{center}
\end{center}
\vspace{3mm}

\begin{abstract}
A set $S \subseteq V$ is a \textit{dominating set} in $G$ if for every $u \in V \setminus S$, there exists $v \in S$ such that $(u,v) \in E$, i.e., $N[S] = V$. A dominating set $S$ is an \emph{Isolate Dominating Set} (\textit{IDS}) if the induced subgraph $G[S]$ has at least one isolated vertex. It is known that Isolate Domination Decision  problem (IDOM) is NP-complete for bipartite graphs. In this paper, we extend this by showing that the IDOM is NP-complete for split graphs and perfect elimination bipartite graphs, a subclass of bipartite graphs. A set $S \subseteq V$ is an \textit{independent set} if $G[S]$ has no edge. A set $S \subseteq V$ is a \textit{secure dominating set} of $G$ if, for each vertex $u \in V \setminus S$, there exists a vertex $v \in S$ such that $ (u,v) \in E $ and $(S \setminus \{v\}) \cup \{u\}$ is a dominating set of $G$. In addition, we initiate the study of a new domination parameter called, \textit{independent secure domination}. A set $S\subseteq V$ is an \textit{Independent Secure Dominating Set} (\textit{InSDS}) if $S$ is an independent set and a secure dominating set of $G$. The minimum size of an InSDS in $G$ is called the \textit{independent secure domination number} of $G$ and is denoted by $\gamma_{is}(G)$. Given a graph $ G$ and a positive integer $ k,$ the InSDM problem is to check whether $ G $ has an independent secure dominating set of size at most $ k.$ We prove that InSDM is NP-complete for bipartite graphs and linear time solvable for bounded tree-width graphs and threshold graphs, a subclass of split graphs. The MInSDS problem is to find an independent secure dominating set of minimum size, in the input graph. Finally, we show that the MInSDS problem is APX-hard for graphs with maximum degree $5.$
\end{abstract}

\vskip.5cm

\section{Introduction}
In this paper, every graph $G=(V,E)$ considered is finite, simple (i.e., without self-loops and multiple edges) and undirected with vertex set $V$ and edge set $E$.  
For a vertex $v \in V$, the (\textit{open}) \textit{neighborhood} of $v$ in $G$ is $N(v)= \{u \in V:(u,v) \in E$\}, the \textit{closed neighborhood} of $v$ is defined as $N[v]=N(v) \cup \{v\}$. If $S \subseteq V$, then the (open) neighborhood of $S$ is the set $N(S) = \cup_{v \in S} N(v)$. The closed neighborhood of $S$ is $N[S] = S \cup N(S)$. The \textit{degree} of a vertex $v$ is the size of the set $N(v)$ and is denoted by $d(v)$. If $d(v) = 0$, then $v$ is called an \textit{isolated vertex} of $G$. If $d(v) = 1$, then $v$ is called a \textit{pendant vertex}. For a graph $G=(V,E),$ and a set $S \subseteq V,$ the subgraph of $G$ \textit{induced} by $S$ is defined as $G[S]=(S,E_S)$, where $E_S=\{(x,y)$ $:$ $x,y \in S$ and $(x,y)\in E \}$. A \textit{spanning subgraph} is a subgraph that contains all the vertices of the graph. If $G[S]$ is a complete subgraph of $G$, then it is called a \textit{clique} of $G$. A set $S \subseteq V$ is an \textit{independent set} if $G[S]$ has no edges. Let $ S \subseteq V(G)$ and $v$ be a vertex in $S$, then the $S$-\textit{external private neighborhood} of $v$ denoted \textit{epn}$(v,S)$ is defined as $\{w:  w \in V(G) \setminus S$ and $N(w) \cap S = \{ v \} \}$.\par
A \textit{split graph} is a graph in which the vertices can be partitioned into a clique and an independent set. For a bipartite graph $G=(X,Y,E)$, an edge $(u, v) \in E$ is \textit{bisimplicial} if $N(u) \cup N(v)$ induces a complete bipartite subgraph in $G$. Let $(e_1,e_2,\ldots,e_k)$ be an ordering of pairwise non-adjacent edges of $G$ (not necessarily all edges of $G$). Let $S_i$ be the set of endpoints of edges $e_1,e_2,\ldots, e_i$ and $S_0=\phi$. An ordering $(e_1,e_2,\ldots,e_k)$ is a \textit{perfect edge elimination ordering} for $G$, if $G[(X \cup Y) \setminus S_k]$ has no edge and each edge $e_i$ is bisimplicial in the remaining induced subgraph $G[(X \cup Y) \setminus S_{i-1}]$. A graph $G$ is a \textit{perfect elimination bipartite graph} if $G$ admits a perfect edge elimination ordering and this subclass of bipartite graphs has been introduced by Golumbic and Goss in \cite{gol}. \par 
In a graph $ G=(V,E) $, a set $S \subseteq V$ is a \textit{Dominating Set} (\textit{DS}) in $G$ if for every $u \in V \setminus S$, there exists $v \in S$ such that $(u,v) \in E$, i.e., $N[S] = V$. The minimum size of a dominating set in $G$ is called the \textit{domination number} of $G$ and is denoted by $\gamma(G)$. Given a graph $ G=(V,E)$ and a positive integer $ k,$ the DOMINATION DECISION problem is to check whether $ G $ has a dominating set of size at most $ k.$ The DOMINATION DECISION problem is known to be NP-complete \cite{garey}. A set $S \subseteq V$ is an \textit{Independent Dominating Set} (\textit{InDS}) of $G$ if $S$ is an independent set and every vertex not in $S$ is adjacent to a vertex in $S$. The minimum size of an InDS in $G$ is called the \textit{independent domination number} of $G$ and is denoted by $i(G)$. The literature on various domination parameters in graphs has been surveyed in \cite{Haynes1,Haynes2}.\par 
An important domination parameter called secure domination has been introduced by E.J. Cockayne in \cite{pog}. A dominating set $S \subseteq V$ is a \textit{Secure Dominating Set} (\textit{SDS}) of $G$, if for each vertex $u \in V \setminus S$, there exists a neighboring vertex $v$ of $u$ in $S$ such that $(X \setminus \{v\}) \cup \{u\}$ is a dominating set of $G$ (in which case $v$ is said to \textit{defend} $u$). The minimum size of a SDS in $G$ is called the \textit{secure domination number} of $G$ and is denoted by $\gamma_s(G)$. Given a graph $ G=(V,E)$ and a positive integer $ k,$ the SDOM problem is to check whether $ G $ has a secure dominating set of size at most $ k.$ The SDOM problem is known to be NP-complete for general graphs \cite{dev} and remains NP-complete even for various restricted families of graphs such as bipartite graphs and split graphs \cite{osd}. Another domination parameter called isolate domination has been introduced by Hamid and Balamurugan in \cite{hamid}. A dominating set $S$ is an \emph{Isolate Dominating Set} (\textit{IDS}) if the induced subgraph $G[S]$ has at least one isolated vertex. The \textit{isolate domination number} $\gamma_0(G)$ is the minimum size of an IDS of $G$. Clearly, every independent dominating set in a graph is an isolate dominating set, so every graph possess an isolate dominating set. Given a graph $ G=(V,E)$ and a positive integer $ k,$ the IDOM problem is to check whether $ G $ has an isolate dominating set of size at most $ k.$ In \cite{rad}, N.J. Rad has proved that the IDOM problem is NP-complete, even when restricted to bipartite graphs. In this paper, we extended this result by proving that this problem is NP-complete for even split graphs and perfect elimination bipartite graphs. We initiated the study of new variant of domination called \textit{independent secure domination} with the following definition.\\
A set $S\subseteq V$ is an \textit{Independent Secure Dominating Set} (\textit{InSDS}) if $S$ is an independent set and a SDS of $G$. The minimum size of an InSDS in $G$ is called the \textit{independent secure domination number} of $G$ and is denoted by $\gamma_{is}(G)$. \\[4pt]
Given a graph $ G$ and a positive integer $ k,$ the InSDM problem is to check whether $ G $ has an independent secure dominating set of size at most $ k.$ The MInSDS problem is to find an independent secure dominating set of minimum size, in the input graph.
\paragraph{Motivation} A communication network is modeled as a graph $ G=(V, E) $ where each node represents a communicating device and each edge represents a communication link between two devices. All the nodes in the network need to communicate and exchange information to perform a task. However, in the networks where the reliability of the nodes is not guaranteed,  every node $ v $ can be a potential malicious node. A virtual protection device at node $ v $ can (i) detect the malicious activity at any node $ u $ in its closed neighborhood (ii) migrate to the malicious node $ u $ and repair it. One is interested to deploy minimum number of virtual protection devices such that every node should have at least one virtual protection device within one hop distance even after virtual protection  device is migrated to malicious node. This problem can be solved by finding a minimum secure dominating set of the graph $ G $. Further, if two virtual protection devices are deployed adjacently then there is a chance of one virtual protection device getting damaged or corrupted if other one is corrupted. To avoid this, the virtual protection devices should be deployed on the nodes such that their neighborhood should not contain any other virtual protection devices. This can be solved by finding a minimum independent secure dominating set of the graph $ G $. \par 
The rest of the paper is organized as follows. In Section 2, some basic results related to independent secure domination are presented. In Section 3, algorithmic complexity of IDOM problem is investigated. In Section 4, the InSDM problem is proved as NP-complete even when restricted to bipartite graphs. Also, the computational complexity difference of InSDM problem with DOMINATION problem is highlighted and some approximation results related to MInSDS problem are presented. 
\section{Basic results}
In this section, some precise values and bounds for independent secure domination in frequently encountered graph classes are presented. 
The following observation is immediate from the definition.
\begin{Observation}\label{2}
	For a graph $ G $, $ \gamma_{is}(G) \ge \gamma_{s}(G).$
\end{Observation}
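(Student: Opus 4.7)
The plan is to derive the inequality directly from the definitions, since nothing deeper is at stake here. I would start by letting $S$ be a minimum independent secure dominating set of $G$, so that $|S| = \gamma_{is}(G)$.

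Next, I would observe that by the definition of an InSDS, the set $S$ is simultaneously an independent set and a secure dominating set of $G$. In particular, $S$ is an SDS of $G$, so it is a feasible candidate when we minimize over all secure dominating sets. Therefore $\gamma_s(G) \le |S| = \gamma_{is}(G)$, which is the desired inequality.

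There is no real obstacle here: the argument is a one-line containment of feasible sets. The only thing worth double-checking is that the existence of a minimum InSDS is guaranteed whenever we want to invoke it (so that the inequality is not vacuous). Since any graph $G$ without isolated vertices admits an independent dominating set $I$, and one may verify that the vertex set $V(G)$ itself is trivially an independent secure dominating set only in degenerate cases, a cleaner way is to note that the inequality should be read as holding whenever $\gamma_{is}(G)$ is defined; if $G$ has no InSDS then the inequality is interpreted with $\gamma_{is}(G) = \infty$ and still holds. I would state this caveat briefly and then conclude.
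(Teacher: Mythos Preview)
Your core argument is correct and is exactly what the paper intends: the observation is stated as ``immediate from the definition,'' and your reasoning that every InSDS is in particular an SDS, hence $\gamma_s(G) \le \gamma_{is}(G)$, is precisely that. The digression about existence of an InSDS is unnecessary and slightly muddled (for instance, $V(G)$ is not in general independent), so simply drop it; the inequality is understood to hold whenever $\gamma_{is}(G)$ is defined.
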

\begin{prop}\label{1}
	For the complete graph $ K_n $ with $ n $ vertices, $\gamma_{is}(K_n)=1.$
\end{prop}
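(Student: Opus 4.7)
The plan is to exhibit a single-vertex independent secure dominating set in $K_n$, which immediately gives $\gamma_{is}(K_n)\le 1$, and then to observe that the bound $\gamma_{is}(K_n)\ge 1$ is trivial for any nonempty graph.

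First I would pick an arbitrary vertex $v\in V(K_n)$ and set $S=\{v\}$. The three conditions of an InSDS are then verified in turn. Independence of $S$ is immediate because a singleton induces no edges. Domination holds because in $K_n$ every other vertex is adjacent to $v$, so $N[S]=V$. For the secure defense condition, I would take any $u\in V\setminus S$ and check that swapping $u$ for $v$ yields another dominating set: $(S\setminus\{v\})\cup\{u\}=\{u\}$, and since $u$ itself is adjacent to every remaining vertex of $K_n$, the set $\{u\}$ still dominates $V$. Hence $v$ defends $u$, and $S$ is secure dominating.

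This establishes $\gamma_{is}(K_n)\le 1$, and the reverse inequality $\gamma_{is}(K_n)\ge 1$ is forced because the empty set does not dominate any nonempty graph. No obstacle is anticipated here; the statement is essentially a sanity check on the definition, with the secure-defense verification being the only step that uses more than one sentence of justification.
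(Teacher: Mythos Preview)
Your proof is correct. The paper itself states this proposition without proof, treating it as immediate from the definitions; your verification of independence, domination, and the secure-defense swap condition is exactly the routine check one would expect, and nothing more is needed.
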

\begin{thm}
	Let $ G $ be a graph with $ n $ vertices. Then $ \gamma_{is}(G) =1$ if and only if $ G = K_n $.
\end{thm}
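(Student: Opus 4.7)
The plan is to handle the two implications separately. The backward direction $G = K_n \Rightarrow \gamma_{is}(G) = 1$ is immediate from Proposition \ref{1}, so essentially nothing new is needed there; I would just cite it in one line.

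For the forward direction, I would assume $\gamma_{is}(G) = 1$ and let $S = \{v\}$ be an InSDS of $G$. The singleton $\{v\}$ is trivially independent, so the content is in the secure domination condition. First, from $S$ being a dominating set I get $N[v] = V$, i.e.\ $v$ is adjacent to every other vertex, so $d(v) = n-1$. Next I would apply the secure domination property to an arbitrary $u \in V \setminus \{v\}$: there must exist $v' \in S$ with $(u,v') \in E$ such that $(S \setminus \{v'\}) \cup \{u\}$ is dominating. Since $|S|=1$, the only choice is $v' = v$, and the swapped set is simply $\{u\}$. Hence $\{u\}$ must dominate $V$, which forces $N[u] = V$, i.e.\ $d(u) = n-1$.

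Since $u$ was arbitrary in $V \setminus \{v\}$ and $v$ itself already has degree $n-1$, every vertex of $G$ has degree $n-1$, so $G = K_n$. The argument is essentially just unpacking the definition; I do not expect any real obstacle. The only mild subtlety to state clearly is why the defending vertex $v'$ must equal $v$ (because $S$ is a singleton), after which the fact that each non-$S$ vertex alone must dominate $V$ falls out and immediately yields the complete graph conclusion.
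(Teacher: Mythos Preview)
Your proof is correct and follows essentially the same approach as the paper: both hinge on the observation that if $\{v\}$ is an InSDS, then swapping $v$ for any $u \in V \setminus \{v\}$ must yield a dominating singleton $\{u\}$, forcing every vertex to have degree $n-1$. The paper phrases this by contradiction (assuming $G \ne K_n$, picking $x,y$ at distance at least $2$, and noting $\{x\}$ fails to dominate $y$), whereas you argue directly, but the underlying idea is identical.
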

\begin{proof}
	Suppose $ \gamma_{is}(G) = 1 $ and let $ S = \{v\} $ be a minimum size InSDS of $ G $. Suppose $G \ne K_n$, then there exists $x, y \in V (G)$ such that $d(x, y) \ge 2$. Then $(S \setminus \{v\}) \cup \{x\} = \{x\}$, which is not a dominating set of $ G $, since $ (x, y) \notin E $. Therefore,
	$ G = K_n $. The converse is true by proposition \ref{1}.
\end{proof}
\begin{prop}
	For the complete bipartite graph $ K_{p,q} $ with $p \le q $ vertices, 	
	$$\gamma_{is}(K_{p,q}) = 
	\begin{cases}
	q, & 
	\text{if } p=1 \\
	p, & \text{otherwise}  
	\end{cases}
	$$
\end{prop}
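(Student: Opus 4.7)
The plan is to exploit the structure of complete bipartite graphs: in $K_{p,q}$ with bipartition $(X,Y)$, $|X|=p\le q=|Y|$, any two vertices on opposite sides are adjacent, so any independent set is contained entirely in $X$ or entirely in $Y$. This is the key observation that drastically limits the candidates for an InSDS.

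First I would establish the upper bounds by exhibiting explicit sets. For $p\ge 2$, the side $S=X$ is independent and dominating (every $Y$-vertex is adjacent to all of $X$); to verify security, for any $u\in Y$ and any $v\in X$, the swap $(X\setminus\{v\})\cup\{u\}$ still dominates because $v$ is adjacent to $u$ and every other $Y$-vertex sees the (nonempty, since $p\ge 2$) set $X\setminus\{v\}$. For $p=1$, take $S=Y$; it is independent and dominates the unique $X$-vertex; security holds because swapping any $v\in Y$ for the center $u$ still leaves the center to dominate $v$, and the remaining $Y$-vertices are dominated by $u$.

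For the matching lower bounds, I would argue as follows. Let $S$ be any InSDS. By the opening observation, $S\subseteq X$ or $S\subseteq Y$. If $S\subsetneq X$, a vertex in $X\setminus S$ has all its neighbors in $Y$, hence is not dominated; similarly for $Y$. Thus $S=X$ or $S=Y$, giving $|S|\in\{p,q\}$. When $p\ge 2$ the minimum is $p$, matching the construction. When $p=1$ we must additionally rule out $S=X$: the single vertex $v\in X$ swapped with any leaf $u\in Y$ leaves $\{u\}$, which fails to dominate the other $q-1\ge 1$ leaves (assuming $q\ge 2$; the degenerate case $q=1$ just gives $K_2$ with $\gamma_{is}=1=q$, still consistent with the formula). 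Hence for $p=1$ the only InSDS is $S=Y$, of size $q$.

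I do not expect a real obstacle here; the argument is essentially a case analysis driven by the independence constraint. The only subtlety is being careful in the $p=1$ case to show that the singleton from $X$ fails the security condition, and to verify that the degenerate case $p=q=1$ is consistent with the stated formula. After that, combining the matching upper and lower bounds yields the claimed value in both cases.
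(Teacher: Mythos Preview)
Your proposal is correct and follows essentially the same approach as the paper: both arguments use the fact that an independent set in $K_{p,q}$ lies entirely in one side, then observe that domination forces the set to equal that whole side, and finally compare the two candidates $X$ and $Y$ (ruling out $X$ when $p=1$ by a failed swap). Your version is in fact more carefully spelled out than the paper's, which leaves the step ``$S\subseteq X$ or $S\subseteq Y$ implies $S=X$ or $S=Y$'' implicit and does not discuss the degenerate case $p=q=1$.
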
	
\begin{proof}
	Suppose  $ G=(A, B, E)=K_{p,q} $, where $ \vert A \vert =p $  and $ \vert B \vert =q $ be a complete bipartite graph. If $ p=1 $, it is clear that an InSDS can be formed with all the vertices of $ B $. Therefore, $\gamma_{is}(G) \le q.$ There is no independent secure dominating set possible, which contains a vertex from $ A.$ Hence $\gamma_{is}(G) \ge q.$ It can be observed that if $ p\ge 2 $, then an InSDS can be formed with all the vertices of $ A $. Therefore, $\gamma_{is}(G) \le p.$ There is no independent secure dominating set possible, which contains a vertex from each partite set $A$ and $ B $. Hence $\gamma_{is}(G) \ge p.$ 
\end{proof}
\begin{prop}\label{3}
	Let $ P_n $ be a path graph with $ n$ $(\ge 4)$ vertices. Then $\gamma_{is}(P_n)=\lceil \frac{3n}{7} \rceil$.
\end{prop}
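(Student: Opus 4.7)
The plan is to prove $\gamma_{is}(P_n)=\lceil 3n/7\rceil$ by establishing matching upper and lower bounds. For the lower bound, every InSDS is in particular an SDS, so Observation~\ref{2} gives $\gamma_{is}(P_n)\ge \gamma_s(P_n)$, and I will invoke the classical value $\gamma_s(P_n)=\lceil 3n/7\rceil$ of Cockayne et al.~\cite{pog} to conclude. If a self-contained argument is preferred, I would analyse the $0/1$ indicator sequence of $S$ along the path: independence forces the $1$'s to be isolated, domination forces every maximal run of $0$'s to have length at most two, and the secure property forces each internal run of length two to be bordered on both sides by a run of length exactly one, since the two non-$S$ vertices of such a run can each be defended only by their unique $S$-neighbour, and the resulting swap preserves domination only when the next $S$-vertex in the appropriate direction lies at distance two. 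Averaging over the admissible run patterns then yields $|S|\ge 3n/7$.

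For the upper bound I will exhibit an explicit InSDS of size $\lceil 3n/7\rceil$. Write $n=7q+r$ with $0\le r\le 6$. Place into $S$ the three vertices $v_{7i+2},v_{7i+4},v_{7i+6}$ of each block $\{v_{7i+1},\dots,v_{7i+7}\}$ for $0\le i\le q-1$, and in the tail of length $r$ append $\emptyset$ for $r=0$, $\{v_{7q+1}\}$ for $r\in\{1,2\}$, $\{v_{7q+1},v_{7q+3}\}$ for $r\in\{3,4\}$, and $\{v_{7q+1},v_{7q+3},v_{7q+5}\}$ for $r\in\{5,6\}$. A direct count gives $|S|=3q+\lceil 3r/7\rceil=\lceil 3n/7\rceil$, and the chosen positions are pairwise at distance at least two, so $S$ is independent.

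It remains to verify the secure property. In the periodic interior of a block, every non-$S$ vertex has an explicit defender: $v_{7i+1}$ and $v_{7i+7}$ are defended by $v_{7i+2}$ and $v_{7i+6}$ respectively, whose only external private neighbour is the defended vertex itself and thus joins the swapped set, while $v_{7i+3}$ and $v_{7i+5}$ are both defended by the middle pattern vertex $v_{7i+4}$, which has no external private neighbour in $S$ at all, so its removal cannot break domination. I expect the main obstacle to be the analogous verification at the boundary between the last full block and the tail, because the defender chosen for a vertex near $v_{7q}$ depends on the residue $r$; however, each of the seven residue classes (together with the small base cases $q=0$, $n\in\{4,5,6\}$) reduces to a constant-size mechanical check of the same flavour, after which the upper bound $|S|=\lceil 3n/7\rceil$ follows and the stated equality holds.
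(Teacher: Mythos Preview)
Your proposal is correct and follows essentially the same approach as the paper: the lower bound via Observation~\ref{2} together with $\gamma_s(P_n)=\lceil 3n/7\rceil$ from~\cite{pog}, and the upper bound via the very same explicit set $\{v_{7i+2},v_{7i+4},v_{7i+6}\}$ per block with the identical tail corrections for each residue $r$. In fact you supply more detail on the defender verification than the paper itself, which simply asserts that the constructed set is an InSDS.
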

\begin{proof}
	From \cite{pog}, we know that $ \gamma_s(P_n)= \lceil \frac{3n}{7} \rceil.$ By this and proposition \ref{2}, we have $\gamma_{is}(P_n)$ is at least $\lceil \frac{3n}{7} \rceil.$ Hence in order to complete the proof, we need to exhibit an ISDS and InSDS of $ P_n $ of size $\lceil \frac{3n}{7} \rceil$. 
	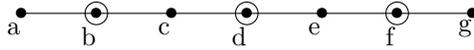
\begin{figure}[H]
		\begin{center}
			
			\begin{tikzpicture}
			\draw (0,0)--(6,0);
			\draw (1,0) circle (1.5mm);
			\draw (3,0) circle (1.5mm);
			\draw (5,0) circle (1.5mm);
			
			\node at (0,0){\textbullet};	\node at (-0.1,-0.2){a};
			\node at (1,0){\textbullet};	\node at (0.9,-0.3){b};
			\node at (2,0){\textbullet};	\node at (1.9,-0.2){c};
			\node at (3,0){\textbullet};	\node at (2.9,-0.3){d};
			\node at (4,0){\textbullet};	\node at (3.9,-0.2){e};
			\node at (5,0){\textbullet};	\node at (4.9,-0.3){f};
			\node at (6,0){\textbullet};	\node at (5.9,-0.2){g};
			
			\end{tikzpicture}
		\end{center}
		\caption{ Path graph $ P_{7} $}
	\end{figure}
	Suppose $ P_n=(v_1,v_2,\ldots,v_n) $ and $ n=7m+r, $ where $ m\ge 1 $ and $ 0 \le r \le 6.$ Define two sets \[
	X = \bigcup_{i=0}^{m-1}\{v_{7i+2},v_{7i+4},v_{7i+6}\}
	\] and \[Y=
	\begin{cases}
	\emptyset, & 
	\text{if } r=0 \\
	\{v_{7m+1}\}, & \text{if } r=1 \text{ or } 2 \\
	\{v_{7m+1},v_{7m+3}\}, & \text{if } r=3 \text{ or }4 \\
	\{v_{7m+1},v_{7m+3},v_{7m+5}\}, & \text{if } r=5 \text{ or } 6 \\  
	\end{cases}
	\]
	Clearly $ \vert X \cup Y \vert = \lceil \frac{3n}{7} \rceil $. It can be noted that the set $ X \cup Y $ forms an independent set. It can also be verified that $ X \cup Y $ forms InSDS of $ P_n $. Hence the result.   
\end{proof}
\begin{Observation}\label{pathcycle}
	If $ G^\prime $ is a spanning subgraph of graph $ G $, then $ \gamma_{is}(G^\prime) \ge \gamma_{is}(G). $
\end{Observation}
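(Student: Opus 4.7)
My plan is to take a minimum independent secure dominating set $S$ of $G'$, so $|S|=\gamma_{is}(G')$, and argue that $S$ can be turned into an InSDS of $G$ of the same cardinality, which immediately yields $\gamma_{is}(G)\le\gamma_{is}(G')$.

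The easy half is the secure dominating property. Because $G'$ is a spanning subgraph, $V(G')=V(G)$ and $E(G')\subseteq E(G)$, so every adjacency present in $G'$ persists in $G$. In particular, a dominating set of $G'$ is automatically a dominating set of $G$. For the defense condition, fix $u\in V\setminus S$ and let $v\in S$ be a defender of $u$ in $G'$: then $(u,v)\in E(G')\subseteq E(G)$, and the swapped set $(S\setminus\{v\})\cup\{u\}$, being dominating in $G'$, remains dominating in $G$. Hence $v$ still defends $u$ in $G$, and $S$ is a secure dominating set of $G$.

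The independence half is where I expect the main obstacle, since the inclusion $E(G')\subseteq E(G)$ runs in the unfavorable direction for independence: a set independent in $G'$ may acquire new internal edges in $G$. To bridge this, I would argue that whenever $u,v\in S$ form an offending edge $(u,v)\in E(G)\setminus E(G')$, one can replace one of the two endpoints, say $u$, by a vertex $w\in \textit{epn}(u,S)$ without destroying secure domination: $w$ will cover everything $u$ used to cover privately, and by the definition of an external private neighbour it is not adjacent to any other vertex of $S$ in $G'$, so after choosing $w$ carefully it remains non-adjacent to $S\setminus\{u\}$ in $G$ as well. Iterating this swap over all offending edges then produces the desired InSDS of $G$ of size $|S|$. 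Verifying that a suitable private neighbour is always available, and that successive swaps do not reintroduce bad edges or break the defender structure established in the previous paragraph, is the delicate point that needs the most care.
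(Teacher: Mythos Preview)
Your instincts are right on both halves: secure domination transfers upward to $G$ with no work, and independence is the real obstruction. The paper itself offers no proof here --- it simply records the statement as an ``observation'' and moves on --- so there is nothing to compare your argument to. The important point is that your proposed repair (swapping an offending endpoint for one of its external private neighbours and iterating) cannot be completed, because the statement is \emph{false} as written.

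A concrete counterexample: take $G'=P_7$ with vertices $v_1,\dots,v_7$ and let $G=G'+(v_2,v_4)$. Then $G'$ is a spanning subgraph of $G$ and $\gamma_{is}(G')=\lceil 21/7\rceil=3$ (the set $\{v_2,v_4,v_6\}$ works in $P_7$). In $G$, however, every independent set is contained in one of the maximal independent sets $\{v_1,v_3,v_5,v_7\}$, $\{v_1,v_4,v_6\}$, $\{v_1,v_4,v_7\}$, $\{v_1,v_3,v_6\}$, $\{v_2,v_5,v_7\}$, $\{v_2,v_6\}$; a short check shows that none of the ones of size at most $3$ is secure (in each case some vertex $v_5$ or $v_1$ or $v_3$ loses its only defender), while $\{v_1,v_3,v_5,v_7\}$ \emph{is} an InSDS. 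Hence $\gamma_{is}(G)=4>3=\gamma_{is}(G')$. (The degenerate case $P_5\subset C_5$ is even more dramatic: the paper itself notes that $C_5$ has no InSDS at all.) So the gap you sensed in the independence step is not a gap you can close; it reflects that the observation is wrong in general. The paper only \emph{uses} it to pass from $P_n$ to $C_n$, and there the specific set $X\cup Y$ built in Proposition~\ref{3} never contains $v_1$, hence it stays independent after the extra edge $(v_1,v_n)$ is added; combined with the (correct) fact that secure domination survives adding edges, the application goes through even though the stated observation does not.
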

\begin{prop}\label{cycle}
	If $ C_n $ is a cycle graph with $ n $ vertices, then $ \gamma_{is}(C_n)=\lceil \frac{3n}{7} \rceil (n\ne 3, 5)$.
\end{prop}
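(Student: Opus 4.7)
The plan is to sandwich $\gamma_{is}(C_n)$ between matching lower and upper bounds of $\lceil 3n/7 \rceil$, using the value of $\gamma_s(C_n)$ on one side and that of $\gamma_{is}(P_n)$ on the other.

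For the lower bound I would apply Observation \ref{2} to obtain $\gamma_{is}(C_n) \ge \gamma_s(C_n)$. The secure domination number of a cycle, $\gamma_s(C_n) = \lceil 3n/7 \rceil$, is a standard result from Cockayne \cite{pog} (the same reference invoked in the proof of Proposition \ref{3} for paths). This yields $\gamma_{is}(C_n) \ge \lceil 3n/7 \rceil$ immediately.

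For the upper bound I would note that $P_n$ is obtained from $C_n$ by deleting a single edge, so $P_n$ is a spanning subgraph of $C_n$. Observation \ref{pathcycle} then gives $\gamma_{is}(C_n) \le \gamma_{is}(P_n)$, and Proposition \ref{3} supplies $\gamma_{is}(P_n) = \lceil 3n/7 \rceil$ for $n \ge 4$. Combining the two inequalities completes the proof.

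The main obstacle, and the point that deserves a careful remark, is explaining why the cases $n = 3$ and $n = 5$ must be excluded. For $n = 3$ the cycle coincides with $K_3$ and Proposition \ref{1} gives $\gamma_{is}(C_3) = 1$, whereas $\lceil 9/7 \rceil = 2$. For $n = 5$ every maximum independent set of $C_5$ has only two vertices, while $\gamma_s(C_5) = 3$; hence no independent set of $C_5$ can be secure dominating and $\gamma_{is}(C_5)$ is not even defined. Concretely, the natural InSDS $\{v_1, v_3, v_5\}$ of $P_5$ furnished by Proposition \ref{3} loses independence in $C_5$ as soon as the wrap-around edge $v_1 v_5$ is inserted, which is precisely the situation in which the spanning-subgraph argument fails to go through; once $n \ge 6$ (or $n = 4$) the corresponding InSDS of $P_n$ avoids both endpoints, so adding the closing edge preserves both independence and secure domination and the argument succeeds.
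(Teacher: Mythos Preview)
Your proposal is correct and follows essentially the same route as the paper: the upper bound comes from Observation~\ref{pathcycle} together with Proposition~\ref{3}, and the lower bound from Observation~\ref{2} together with the known value $\gamma_s(C_n)=\lceil 3n/7\rceil$ from \cite{pog}. Your treatment of the excluded cases $n=3,5$ is more thorough than the paper's own proof, which simply remarks that no InSDS exists for $C_5$; one small inaccuracy is that the explicit set from Proposition~\ref{3} does contain the endpoint $v_n$ when $r\in\{1,3,5\}$, but since it never contains $v_1$ the closing edge still preserves independence, so your conclusion stands.
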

\begin{proof}
	From observation \ref{pathcycle}, $ \gamma_{is}(P_n) \ge \gamma_{is}(C_n)$.  From proposition \ref{3}, it can be noted that $ \gamma_{is}(C_n) \le \lceil \frac{3n}{7} \rceil $. InSDS does not exist for $ C_5 $. From observation \ref{2}, we know that $ \gamma_{s}(G) \le \gamma_{is}(G).$ It is known that $ \gamma_s(C_n)= \lceil \frac{3n}{7} \rceil$ \cite{pog}. Hence the result.  
\end{proof}
\begin{thm}
	Let $ G $ be a non-complete graph. Then $ \gamma_{is}(K_1+G)=\gamma_{is}(G). $
\end{thm}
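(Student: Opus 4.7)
The plan is to prove the two inequalities $\gamma_{is}(K_1+G) \le \gamma_{is}(G)$ and $\gamma_{is}(K_1+G) \ge \gamma_{is}(G)$ separately. Write $H=K_1+G$ and let $u$ denote the apex vertex, so $u$ is adjacent to every vertex of $V(G)$ in $H$ and no new edges are added inside $V(G)$.

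For the upper bound, I would take a minimum InSDS $S$ of $G$ and verify that the same set $S$ is an InSDS of $H$. Independence is immediate since $H[S]=G[S]$. For domination of $H$, the set $S$ already dominates $V(G)$ in $G$, and the apex $u$ is dominated by any vertex of $S$ because $u$ is adjacent to everything in $V(G)$. For security, any $w\in V(G)\setminus S$ has a defender $v\in S$ in $G$, and after the swap $(S\setminus\{v\})\cup\{w\}$ still dominates $V(G)$ and also dominates $u$ via $w$. To defend $u$ itself, pick any $v\in S$; then $(S\setminus\{v\})\cup\{u\}$ dominates all of $V(H)$ because $u$ is adjacent to every vertex of $G$. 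Hence $\gamma_{is}(H)\le |S|=\gamma_{is}(G)$.

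For the lower bound, I would start with a minimum InSDS $T$ of $H$ and split into two cases according to whether $u\in T$. If $u\notin T$, then $T\subseteq V(G)$, and since the edges of $H$ restricted to $V(G)$ coincide with $E(G)$, routine verification shows that $T$ is independent, dominating and secure in $G$, giving $\gamma_{is}(G)\le |T|$. If $u\in T$, then independence of $T$ together with the fact that $u$ is adjacent to every vertex of $V(G)$ forces $T=\{u\}$, so $\gamma_{is}(H)=1$. By the preceding characterization theorem ($\gamma_{is}=1$ iff the graph is complete), this would make $H$ complete, and hence $G$ complete as well, contradicting the hypothesis that $G$ is non-complete. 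Therefore only the first case occurs.

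The main obstacle is the $u\in T$ subcase in the lower bound: at first glance the apex vertex looks attractive to include in an InSDS because it dominates everything single-handedly, and one has to rule this out cleanly. This is precisely what the characterization of graphs with $\gamma_{is}=1$ does for us, so invoking it once disposes of the obstruction. Everything else is a straightforward check that domination and security transfer across the join in both directions.
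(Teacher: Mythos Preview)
Your proposal is correct and follows essentially the same route as the paper: show that any minimum InSDS of $G$ remains an InSDS of $K_1+G$, and for the reverse inequality observe that the apex cannot lie in any InSDS of $K_1+G$ (since independence would then force the InSDS to be a singleton, contradicting the characterization $\gamma_{is}=1\iff$ complete), so a minimum InSDS of $K_1+G$ lives inside $V(G)$ and works for $G$. Your write-up is in fact more careful than the paper's in verifying that domination and security transfer back to $G$, but the argument is the same.
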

\begin{proof}
	Let $ S $ be an InSDS of a graph $ G $. Note that $ S$ is also an InSDS of $ K_1+G$ i.e., $ \gamma_{is}(K_1+G)\le \gamma_{is}(G).$ Let $ v $ be the vertex of $K_1$. Since $ K_1+G $ is not a complete graph, $ S $ should contain at least two vertices. If $ v \in S $ then no other vertex from $ K_1+G $ can be included in $ S $. Therefore $ v \notin S$ and hence $\gamma_{is}(K_1+G) \ge \gamma_{is}(G).$
\end{proof}
\noindent The following result can be obtained from above theorem and proposition \ref{cycle}.
\begin{remark}
	Let $ W_n $ be a wheel graph with $ n+1$ $(\ne 6)$ vertices then, $ \gamma_{is}(W_n)=\lceil \frac{3n}{7} \rceil. $
\end{remark}
\begin{thm}
	Let $ G= P_m \Box P_k $ be the Cartesian product of two path graphs $ P_m $ and $ P_k $. Then, $ \gamma_{is}(G) \le \lceil \frac{mk}{3}\rceil+4 $.
\end{thm}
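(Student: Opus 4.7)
The plan is to construct explicitly an independent secure dominating set $S \subseteq V(G)$ of size at most $\lceil mk/3 \rceil + 4$. Label the vertices of $G = P_m \Box P_k$ as $(i,j)$ with $1 \le i \le m$ and $1 \le j \le k$. First, I would take as a base the diagonal residue set $S_0 = \{(i,j) : i+j \equiv c \pmod{3}\}$, where $c \in \{0,1,2\}$ is chosen (depending on $m \bmod 3$ and $k \bmod 3$) so that every one of the four corners is either in $S_0$ or has a neighbor in $S_0$. Since the three residue classes partition $V(G)$ into sizes summing to $mk$, such a $c$ can always be chosen with $|S_0| \le \lceil mk/3 \rceil$. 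Independence of $S_0$ is immediate---adjacent vertices have coordinate-sums differing by $\pm 1$, so they cannot both lie in the same class---and domination is a routine residue check on the neighbors of any non-$S_0$ vertex.

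The heart of the proof is the secure-defense condition. A short residue calculation shows that every non-$S_0$ vertex in the \emph{interior} of the grid (those with $2 \le i \le m-1$ and $2 \le j \le k-1$) has exactly two $S_0$-neighbors. Hence for any interior $u \notin S_0$ and any candidate defender $v \in N(u) \cap S_0$, every other neighbor of $v$ already has a second $S_0$-neighbor; the swap $(S_0 \setminus \{v\}) \cup \{u\}$ therefore remains dominating, and so $v$ defends $u$. Consequently the secure-defense condition can fail only in a bounded region around the four corners.

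For each corner at which security of $S_0$ fails, I would perform a local constant-size modification of $S_0$---typically a small swap (remove a nearby $S_0$-vertex and add a non-adjacent replacement) together with at most one purely added vertex---designed to (i) keep the set independent, (ii) preserve the dominating property globally, and (iii) supply each uniquely-dominated boundary vertex with a second dominator, so that the original defender is freed up to defend the corner. In each case the net number of vertices \emph{added} per corner is at most one. Summing over the at most four problematic corners yields an independent secure dominating set $S$ with $|S| \le \lceil mk/3 \rceil + 4$.

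The main obstacle is verifying that the corner corrections go through in every residue class. The vertices immediately surrounding a corner are typically all adjacent to $S_0$, so pure additions are unavailable and local swaps are forced; each swap must be checked not to cascade into new domination failures. This requires a case analysis on $(m \bmod 3, k \bmod 3)$, with small grids (say $m \le 3$ or $k \le 3$) treated directly as base cases. The $+4$ slack in the bound comfortably absorbs the total cost of simultaneously resolving all four corners.
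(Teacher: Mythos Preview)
Your approach is essentially the same as the paper's: both take a diagonal residue-class pattern of size roughly $mk/3$ as an independent dominating base set, observe that secure defense can only fail near the corners, and then repair the corners by local swaps and additions costing at most four extra vertices in total. The only cosmetic differences are that the paper uses the pattern $i\equiv j\pmod 3$ (so $i-j\equiv 0$) rather than your $i+j\equiv c$, and it allocates the $+4$ budget as up to $+2$ at each of two diagonally opposite corners rather than $+1$ at each of four; these are reflections of the same idea.
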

\begin{proof}
	We construct an InSDS of the grid graph $ G=P_m \Box P_k $ of size at most $ \lceil \frac{mk}{3}\rceil+4 $ by considering standard plane embedding as depicted in figure \ref{fig:assgn}. Include all solid vertices in a set $ X $ and let $ Y=V \setminus X. $ It is clear that $ \vert X \vert =  \lceil\frac{mk}{3}\rceil.$ 	
	\begin{figure}
		\begin{center}
			\begin{tikzpicture}
			\draw (0,0)--(7.2,0);
			\draw (0,0.8)--(7.2,0.8);
			\draw (0,1.6)--(7.2,1.6);
			\draw (0,2.4)--(7.2,2.4);
			\draw (0,3.2)--(7.2,3.2);
			\draw (0,4)--(7.2,4);
			\draw (0,4.8)--(7.2,4.8);
			\draw (0,5.6)--(7.2,5.6);
			\draw (0,6.4)--(7.2,6.4);
			
			\node at (0,6.35){\textbullet};			\node at (2.4,6.35){\textbullet};			\node at (4.8,6.35){\textbullet};			\node at (7.2,6.35){\textbullet};
			
			\node at (0.8,5.55){\textbullet};		\node at (3.2,5.55){\textbullet};			\node at (5.6,5.55){\textbullet};
			
			\node at (1.6,4.75){\textbullet};		\node at (4,4.75){\textbullet};				\node at (6.4,4.75){\textbullet};
			
			\node at (0,3.95){\textbullet};			\node at (2.4,3.95){\textbullet};			\node at (4.8,3.95){\textbullet};			\node at (7.2,3.95){\textbullet};
			
			\node at (0.8,3.15){\textbullet};		\node at (3.2,3.15){\textbullet};			\node at (5.6,3.15){\textbullet};
			
			\node at (1.6,2.35){\textbullet};		\node at (4,2.35){\textbullet};				\node at (6.4,2.35){\textbullet};
			
			\node at (0,1.55){\textbullet};			\node at (2.4,1.55){\textbullet};			\node at (4.8,1.55){\textbullet};			\node at (7.2,1.55){\textbullet};
			
			\node at (0.8,0.75){\textbullet};		\node at (3.2,0.75){\textbullet};			\node at (5.6,0.75){\textbullet};
			
			\node at (1.6,-0.05){\textbullet};		\node at (4,-0.05){\textbullet};			\node at (6.4,-0.05){\textbullet};
			
			\draw (7.2,6.4)--(7.2,0);			\draw (6.4,6.4)--(6.4,0);			\draw (5.6,6.4)--(5.6,0);			\draw (4.8,6.4)--(4.8,0);
			\draw (4,6.4)--(4,0);				\draw (3.2,6.4)--(3.2,0);			\draw (2.4,6.4)--(2.4,0);			\draw (1.6,6.4)--(1.6,0);
			\draw (0.8,6.4)--(0.8,0);			\draw (0,6.4)--(0,0);
			\end{tikzpicture}
			\caption{Grid graph $ P_9 \Box P_{10} $}
			\label{fig:assgn}
		\end{center}
	\end{figure}
	Depending upon the value of $ m$ and $	k$ $modulo$ $3$, three cases are possible. In every case, each $ u \in Y $ which is not within distance $ 3 $ of top right corner or bottom left corner, is defended by a vertex in $ X $. Therefore, in each case add triangle vertices to $ X $ by removing square vertices from it as illustrated in figure \ref{fig12}. Now, all the vertices of $ G $ can be defended by $ X.$ Hence, $ X $ forms an InSDS of size at most $\lceil\frac{mk}{3}\rceil+4 $.
	\begin{figure}
		\begin{center}
			
			\begin{tikzpicture}
			
			\draw (0,2.4) --(3.2,2.4);
			\draw (.8,1.6) --(3.2,1.6);
			\draw (1.6,.8) --(3.2,.8);
			\draw (2.4,0) --(3.2,0);
			
			\draw (0.8,2.4) --(0.8,1.6);
			\draw (1.6,2.4) --(1.6,0.8);
			\draw (2.4,2.4) --(2.4,0);
			\draw (3.2,2.4) --(3.2,-0.8);
			
			\node at (1.6,2.4){\textbullet};
			\node at (2.4,1.6){\textbullet};
			\node at (3.2,0.8){\textbullet};
			
			\node at (1.6,-0.8) {$ (a) $};
			\draw (3,2.2)--(3.4,2.2) -- (3.2,2.6)-- (3,2.2);

			\draw (4.4,2.4) --(7.6,2.4);
			\draw (5.2,1.6) --(7.6,1.6);
			\draw (6,.8) --(7.6,.8);
			\draw (6.8,0) --(7.6,0);
			
			\draw (5.2,2.4) --(5.2,1.6);
			\draw (6,2.4) --(6,0.8);
			\draw (6.8,2.4) --(6.8,0);
			\draw (7.6,2.4) --(7.6,-0.8);
			
			\node at (5.2,2.4){\textbullet};
			\node at (6,1.6){\textbullet};
			\node at (6.8,0.8){\textbullet};
			\node at (7.6,0){\textbullet};
			
			\filldraw (7.6,2.4) circle (2pt);
			\draw (7.4,2.2)--(7.8,2.2)--(7.8,2.6)--(7.4,2.6)--(7.4,2.2);
			\draw (6.6,2.2)--(7,2.2) -- (6.8,2.6)-- (6.6,2.2);
			\draw (7.4,1.4)--(7.8,1.4) -- (7.6,1.8)-- (7.4,1.4);
			
			\node at (6,-0.8) {$ (b) $};
			
			\draw (8.8,2.4) --(12,2.4);
			\draw (9.6,1.6) --(12,1.6);
			\draw (10.4,.8) --(12,.8);
			\draw (11.2,0) --(12,0);
			
			\draw (9.6,2.4) --(9.6,1.6);
			\draw (10.4,2.4) --(10.4,0.8);
			\draw (11.2,2.4) --(11.2,0);
			\draw (12,2.4) --(12,-0.8);
			
			\node at (8.8,2.4){\textbullet};
			\node at (9.6,1.6){\textbullet};
			\node at (10.4,0.8){\textbullet};
			\node at (11.2,0){\textbullet};
			\node at (12,-0.8){\textbullet};
			
			\node at (11.2,2.4){\textbullet};
			\filldraw (11.2,2.4) circle (2pt);
			\draw (11,2.2)--(11.4,2.2)--(11.4,2.6)--(11,2.6)--(11,2.2);
			\node at (12,1.6){\textbullet};
			\filldraw (12,1.6) circle (2pt);
			\draw (11.8,1.4)--(12.2,1.4)--(12.2,1.8)--(11.8,1.8)--(11.8,1.4);
			
			\draw (10.2,2.2)--(10.6,2.2) -- (10.4,2.6)-- (10.2,2.2);
			\draw (11,1.4)--(11.4,1.4) -- (11.2,1.8)-- (11,1.4);
			\draw (11.8,0.6)--(12.2,0.6) -- (12,1)-- (11.8,0.6);
			\draw (11.8,2.2)--(12.2,2.2) -- (12,2.6)-- (11.8,2.2);
			
			\node at (10.4,-0.8) {$ (c) $};
			\end{tikzpicture}
			\caption{Three possible cases for top right corners of $ P_m \Box P_k $} \label{fig12}
		\end{center}
	\end{figure}
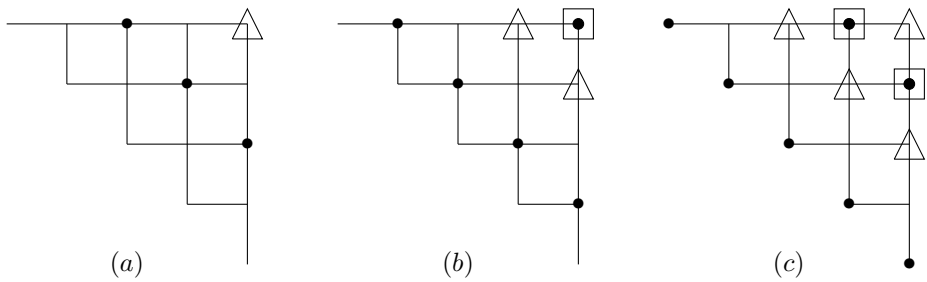
\end{proof}
\section{Complexity of isolate domination}
\label{sec:1}
\noindent In this section, we show that the IDOM problem is NP-complete for split graphs and perfect elimination bipartite graphs. \\[4pt]
\noindent 
\noindent The SET COVER DECISION problem which is used to prove NP-completeness of IDOM for split graphs, is defined as follows.\\[4pt]
\noindent
\textit{SET COVER DECISION problem} (\textit{SET-COVER}) \\ [6pt]
\textit{Instance:} A finite set $X$ of elements, a family of $m$ subsets of elements $C$, and a positive integer $k$.\\
\textit{Question:} Does there exist a subfamily of $k$ subsets $C^\prime$ whose union equals $X$? \\[5pt] 
SET-COVER problem has been proved as NP-complete by R.M. Karp \cite{karp}.
\begin{thm}
	IDOM is NP-complete for split graphs.
\end{thm}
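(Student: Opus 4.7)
The plan is to establish NP-hardness via a polynomial reduction from SET-COVER, after first noting that IDOM lies in NP: given a candidate set $S$, one can check $N[S]=V$ and the existence of a vertex in $S$ with no neighbor in $S$, both in polynomial time.

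For the reduction, I would build a split graph $G$ from an instance $(X,C,k)$ of SET-COVER with $X=\{x_1,\ldots,x_n\}$ and $C=\{C_1,\ldots,C_m\}$, assuming (without loss of generality) that $X$ is non-empty and that every $x_i$ belongs to some $C_j$, as otherwise SET-COVER is trivially infeasible. Introduce vertices $c_1,\ldots,c_m$ that form a clique $K$, vertices $x_1,\ldots,x_n$ that form an independent set, one extra isolated vertex $v^{*}$ with no incident edges, and an edge $(x_i,c_j)$ precisely when $x_i\in C_j$. Since $K$ is a clique and $\{x_1,\ldots,x_n,v^{*}\}$ is independent, $G$ is a split graph, and the whole construction is clearly polynomial. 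The reduction then asks whether $G$ has an isolate dominating set of size at most $k'=k+1$.

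For correctness, the forward direction is direct: a set cover $C'$ of size at most $k$ gives $S=\{c_j:C_j\in C'\}\cup\{v^{*}\}$, which dominates $K$ (any chosen $c_j$ dominates the whole clique), dominates each $x_i$ (the cover contains some $C_j\ni x_i$), and has $v^{*}$ as an isolated vertex of $G[S]$. Conversely, given an IDS $S$ with $|S|\le k+1$, first observe that $v^{*}\in S$ because $N[v^{*}]=\{v^{*}\}$. Then iteratively replace any $x_i\in S$ by some $c_j$ with $x_i\in C_j$: the replacement does not increase $|S|$, it preserves domination (the neighbors of $x_i$ all lie in $K$, and the incoming $c_j$ dominates both $x_i$ and all of $K$), and $v^{*}$ remains isolated in $G[S]$ because it has no neighbors in $G$. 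The resulting set has the form $\{v^{*}\}\cup S'$ with $S'\subseteq K$ and $|S'|\le k$, and $\{C_j:c_j\in S'\}$ is the desired set cover.

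The main obstacle to overcome is the isolate condition: the naive clique-side choice $\{c_j:C_j\in C'\}$ would dominate $G$, but $G[S]$ is then a clique and has no isolated vertex once $|S|\ge 2$. The universal-isolated vertex $v^{*}$ resolves this cleanly, because it must lie in every dominating set of $G$, automatically certifies the isolate property, and can be placed on the independent-set side without disturbing the split structure; this reduces the intricate "IDS on split graphs" question back to the classical SET-COVER argument.
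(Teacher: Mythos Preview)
Your proof is correct and, like the paper's, reduces from SET-COVER by encoding sets as clique vertices and elements as independent-set vertices; the difference lies in the gadget that forces the isolate condition. The paper attaches to each set-vertex $c_j$ two extra vertices $u_j,v_j$, with all $u_j$ absorbed into the clique and each $v_j$ left pendant on $u_j$; every dominating set must then contain one of $u_j,v_j$ for each $j$, which both supplies isolated vertices in $G[D]$ and pushes the threshold to $m+k$. Your single isolated vertex $v^{*}$ achieves the same effect with threshold $k+1$ and a shorter argument, so your construction is strictly simpler. The only thing the paper's construction buys is that its split graph is connected, whereas yours has an isolated vertex; if one wanted the statement for \emph{connected} split graphs, a small tweak of your idea (e.g.\ replacing $v^{*}$ by a pendant attached to one fresh clique vertex) would recover that.
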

\begin{proof} 	
	Given a split graph $G=(V,E)$, a positive integer $k$ and an arbitrary set $D \subseteq V$, we can check in polynomial time whether $D$ is an IDS of $G$ of size at most $ k$. Hence IDOM problem for split graphs is in NP. \par
	To prove NP-hardness of IDOM for split graphs, we propose a reduction from SET-COVER problem. Let $X=\{x_1$, $x_2,\ldots,x_n\}$, $C=\{C_1,C_2,\ldots,$ $C_m\}$, $\vert X \vert =n$ and $\vert C \vert = m$ be an instance of SET-COVER problem. Construct a graph $G$ by creating the following vertices: (i) a vertex $x_i$ for each element $x_i \in X$; (ii) vertices $c_j,u_j$ and $v_j$ for each subset $C_j \in C$. Let $I= \{x_1,x_2,\ldots,x_n\}$, $J=\{v_1,v_2,\ldots,v_m\}$, $K=\{c_1,c_2,\ldots,c_m\}$ and $L=\{u_1,u_2,\ldots,u_m\}$. Add the following edges in $G$: (i) if $x_i \in C_j$, then add edge $(x_i,c_j)$, where $1\le i \le n$ and $1 \le j \le m$.
	(ii) edges between every pair of vertices in the set $K \cup L$. (iii) edges between $u_i$ and $v_i$ where $1\le i \le m$. It can be observed that the set $I \cup J$ is an independent set and $ K \cup L $ is a clique in $G$. Therefore, $G$ is a split graph and can be constructed in polynomial time. An example construction of graph $G$ with SET-COVER instance with $X=\{x_1,x_2$,$\ldots,x_9\}$ and $C=\{\{x_1,x_2,x_4\},\{x_1,x_2,x_3\},\{x_3,x_6,x_9\},\{x_4,x_5,x_6,x_9\},$ $\{x_5,x_7,x_8\}\}$ is shown in figure \ref{setc}.\par 
	\begin{figure}
		\begin{center}
			
			\begin{tikzpicture}[scale=1.20]
			\node at (0.5,0){\textbullet};\node at (0.5,-0.3){$ v_1 $};
			\node at (0.5,1){\textbullet};\node at (0.5,1.2){$ u_1 $};
			
			\draw (0.5,0)--(0.5,1);
			\node at (0.5,2){\textbullet};\node at (0.5,1.7){$ c_1 $};
			
			\draw (0.5,2)--(0,3)--(2.5,2)--(2,3)--(4.5,2);
			\draw (7.75,2)--(4,3)--(6.5,2)--(5,3)--(4.5,2);
			\draw (1,3)--(0.5,2)--(3,3)--(6.5,2);
			\draw (2.5,2)--(1,3);
			\draw (7,3)--(7.75,2)--(6,3);
			\draw (8,3)--(4.5,2);
			
			\node at (0,3){\textbullet};\node at (0,3.3){$ x_1 $};
			
			\node at (2.5,0){\textbullet};\node at (2.5,-0.3){$ v_2 $};
			\draw (2.5,0)--(2.5,1);
			\node at (1,3){\textbullet};\node at (1,3.3){$ x_2 $};
			
			\node at (4.5,0){\textbullet};\node at (4.5,-0.3){$ v_3 $};
			\node at (2.5,1){\textbullet};\node at (2.5,1.2){$ u_2 $};
			\draw (4.5,0)--(4.5,1);
			\draw (6.5,0)--(6.5,1);
			\draw (7.75,0)--(7.75,1);
			\node at (2.5,2){\textbullet};\node at (2.5,1.7){$ c_2 $};
			\node at (2,3){\textbullet};\node at (2,3.3){$ x_3 $};
			
			\node at (6.5,0){\textbullet};\node at (6.5,-0.3){$ v_4 $};
			\node at (3,3){\textbullet};\node at (3,3.3){$ x_4 $};
			
			\node at (7.75,0){\textbullet};\node at (7.75,-0.3){$ v_5 $};
			\node at (4.5,1){\textbullet};\node at (4.5,1.2){$ u_3 $};
			\node at (4.5,2){\textbullet};\node at (4.5,1.7){$ c_3 $};
			\node at (4,3){\textbullet};\node at (4,3.3){$ x_5 $};
			
			\node at (5,3){\textbullet};\node at (5,3.3){$ x_6 $};
			
			\node at (6.5,1){\textbullet};\node at (6.5,1.2){$ u_4 $};
			\node at (6.5,2){\textbullet};\node at (6.5,1.7){$ c_4 $};
			\node at (6,3){\textbullet};\node at (6,3.3){$ x_7 $};
			
			\node at (7,3){\textbullet};\node at (7,3.3){$ x_8 $};
			
			\node at (7.75,1){\textbullet};\node at (7.75,1.2){$ u_5 $};
			\node at (7.75,2){\textbullet};\node at (7.75,1.7){$ c_5 $};
			\node at (8,3){\textbullet};\node at (8,3.3){$ x_9 $};
			
			\draw (8,3)--(6.5,2);
			
			\draw (0,0.5)--(8,0.5)--(8,2.5)--(0,2.5)--(0,0.5);
			
			\end{tikzpicture}
			\caption{Construction of a split graph $G$ from an instance of SET-COVER}
			\label{setc}
		\end{center}
	\end{figure}
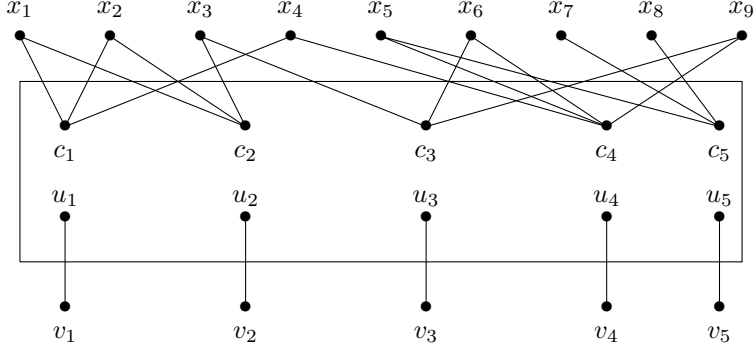
	Now we show that given instance of SET-COVER problem $<X,C>$ has a set cover of size at most $k$ if and only if the constructed graph $G$ has an IDS of size at most $m+k$. Suppose $C^\prime \subseteq C$ is a set cover of $X$, with $\vert C^\prime \vert = k$, then it is easy to verify that the set $D= \{c_j : C_j \in C^\prime\} \cup \{v_i$ $:$ $1 \le i \le m\}$ is an IDS of size at most $m+k$ in $G$.\par
	Conversely, suppose $D \subseteq V$, be an IDS of size at most $m+k$ in $G$. It can be easily seen that every IDS of $G$ must include at least one of $u_j$ or $v_j$ for every $j$, where $1 \le j \le m$. Since the vertex $v_j$ is a pendant vertex, $\vert D \cap (J \cup L) \vert \ge m$ which implies $\vert D \cap (I \cup K) \vert \le k$. Now we have two possible cases. \textit{Case }(\textit{i}) $ D \cap I = \phi$, it can be easily verified that $D \cap K$ is a set cover of size at most $k$. \textit{Case} (\textit{ii}) $ D \cap I \ne \phi$, since $I$ is an independent set, every vertex in $D \cap I$ can be replaced with its adjacent vertex in the set $S$. Hence there exists a set cover of size at most $ k. $  
\end{proof}
\noindent The decision version of domination problem is defined as follows.\\[4pt] 
\noindent 
\textit{DOMINATION DECISION problem (DOM)} \\ [6pt]
\textit{Instance:} A simple,undirected graph $G = (V, E)$ and a positive integer $k$.\\
\textit{Question:} Does there exist a dominating set of size at most $ k $ in $ G $ ?\\[4pt]
\noindent The DOM problem in bipartite graphs (DSDPB) has been proved as NP-complete by A.A. Bertossi \cite{bersto}.
\begin{thm}
	IDOM is NP-complete for perfect elimination bipartite graphs.
\end{thm}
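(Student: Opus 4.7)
The plan is to reduce from the DOM problem on bipartite graphs (Bertossi), stated just before the theorem. Given a bipartite instance $(G,k)$ with $n = |V(G)|$, I would construct a graph $G'$ by attaching to each vertex $v \in V(G)$ a gadget of three fresh vertices $a_v, b_v, c_v$ joined by the edges $(v,a_v)$, $(a_v,b_v)$, $(b_v,c_v)$, $(c_v,v)$, so that $\{v,a_v,b_v,c_v\}$ induces a $4$-cycle in $G'$. The construction runs in polynomial time, is clearly bipartite (place $a_v, c_v$ opposite to $v$ and $b_v$ on the same side as $v$), and membership of IDOM in NP is immediate since for a candidate set one can check in polynomial time that it dominates and that $G'[D]$ has an isolated vertex. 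The whole reduction then boils down to proving the identity $\gamma_0(G') = n + \gamma(G)$.

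First I verify that $G'$ is perfect elimination bipartite by exhibiting an explicit perfect edge elimination ordering. Begin by eliminating all the edges $(a_v, b_v)$: these are pairwise non-adjacent, and each is bisimplicial since $N_{G'}(a_v) \cup N_{G'}(b_v) = \{v, a_v, b_v, c_v\}$ induces a $K_{2,2}$ with bipartition $\{v, b_v\}$ versus $\{a_v, c_v\}$. After this round every $a_v$ and $b_v$ is gone, leaving $G$ with a pendant $c_v$ attached to each $v$; the pendant edges $(v, c_v)$ are pairwise non-adjacent and pendant edges are always bisimplicial, so eliminating them in turn finishes the ordering with no vertex remaining.

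Next I would prove $\gamma_0(G') = n + \gamma(G)$. For the upper bound, take $D = A \cup \{b_v : v \in V(G)\}$ where $A$ is a minimum dominating set of $G$: it is routine that $D$ dominates $G'$, and each $b_v$ has its only neighbors $a_v, c_v$ outside $D$, so every $b_v$ is isolated in $G'[D]$, witnessing that $D$ is an IDS of size $n + \gamma(G)$. The main obstacle is the matching lower bound, which I would derive already at the level of $\gamma(G')$. Given any dominating set $D$ with $A = D \cap V(G)$, dominating $b_v$ forces $|\{a_v, b_v, c_v\} \cap D| \ge 1$ for every $v$, giving $|D| \ge |A| + n$. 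The subtle point is that $b_v$ does not dominate $v$; so if $v \notin A$ has no $G$-neighbor in $A$, then one of $a_v, c_v$ must lie in $D$, and a short case check shows this forces at least two gadget vertices for such $v$. Writing $r = |V(G) \setminus N_G[A]|$, this sharpens the bound to $|D| \ge |A| + n + r$; since $A \cup (V(G) \setminus N_G[A])$ is itself a dominating set of $G$, we have $|A| + r \ge \gamma(G)$, yielding $|D| \ge n + \gamma(G)$. Combining both directions gives $\gamma_0(G') = n + \gamma(G)$, so $G$ admits a dominating set of size at most $k$ if and only if $G'$ admits an isolate dominating set of size at most $n + k$, completing the reduction.
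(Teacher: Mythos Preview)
Your proof is correct, but it follows a genuinely different route from the paper's.

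The paper attaches a \emph{pendant path} $a_i\!-\!b_i\!-\!c_i$ to certain vertices of one bipartition class of $G$ (those singled out by a $Y_1/Y_2$ split), and its perfect edge elimination ordering first peels off the pendant edges $b_ic_i$, then the (now pendant) edges $a_iy_i$, and finally relies on pre-existing pendant edges of $G$ to finish. The correctness argument for the size equivalence is then essentially a one-line count, since each pendant tail forces exactly one of $b_i,c_i$ into any dominating set.

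Your construction instead glues a $4$-cycle $v\!-\!a_v\!-\!b_v\!-\!c_v\!-\!v$ through \emph{every} vertex of $G$. This buys you a much cleaner perfect-elimination argument: after eliminating the bisimplicial edges $a_vb_v$ (each sitting in a $K_{2,2}$) and then the pendant edges $vc_v$, \emph{all} vertices of $G'$ are gone, so no structural hypothesis on $G$ beyond bipartiteness is ever used. The price you pay is on the domination side: because the gadget is a cycle rather than a pendant path, a single gadget vertex $a_v$ can simultaneously dominate $v$ and $b_v$, so the naive count only gives $|D|\ge |A|+n$. Your refinement via $r=|V(G)\setminus N_G[A]|$ and the observation that $A\cup(V(G)\setminus N_G[A])$ dominates $G$ recovers the sharp bound $|D|\ge n+\gamma(G)$; the case check that each $v\notin N_G[A]$ forces at least two gadget vertices is indeed short (dominating $a_v$ and $c_v$, with $v\notin D$, forces $D\cap\{a_v,b_v\}\ne\emptyset$ and $D\cap\{b_v,c_v\}\ne\emptyset$, while dominating $v$ additionally forces $D\cap\{a_v,c_v\}\ne\emptyset$).

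In short: the paper trades a more delicate elimination ordering for a trivial counting step, while your $C_4$ gadget trades a trivial elimination ordering for a two-line counting refinement. Both reductions are valid.
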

\begin{proof}
	It is known that IDOM is in NP. To prove the NP-hardness, we provide a polynomial transformation from an instance of DSDPB to an instance of IDOM problem in perfect elimination bipartite graphs. Given a bipartite graph $G=(X,Y,E)$, we construct a graph $G^\prime=(X^\prime,Y^\prime,E^\prime)$ as follows. Let $Y_1=\{y \in Y : x \in N(y), d(x) \ne 1 \}$, $Y_2=Y \setminus Y_1$, i.e. $Y_1$ is the set of vertices in $Y$ which have degree one neighbors. With out of loss of generality, let for some $l \ge 0$, $Y_1=\{ y_1, y_2,\ldots,y_l\}$ and for some $q \ge 0$, $Y_2=\{ y_{l+1}, y_{l+2},\ldots,y_q\}$.
	Let $X^\prime = X \cup (\cup_{i=1}^l \{a_i, c_i\})$, $Y^\prime= Y \cup (\cup_{i=1}^l \{b_i\})$ and $E^\prime=E \cup (\cup_{i=1}^l \{(y_i,a_i),(a_i,b_i),(b_i,c_i)\})$. For all $1 \le i \le l$, let $A=\{a_1, a_2, \ldots, a_l\}, B=\{b_1, b_2,\ldots, b_l\},$ and $C=\{c_1, c_2,\ldots, c_l\}$. Here $G^\prime=(X^\prime,Y^\prime,E^\prime)$ is obtained from $G$ by attaching a path $P_3$ to $y_i$ for all $i$, $1 \le i \le l$. An example construction of a graph $G^\prime$ from a graph $G$ is shown in figure \ref{fig:test}.  \par
	Note that $G^\prime$ is a bipartite graph and can be constructed from $G$ in polynomial time. It can be verified that ordering $\sigma=(b_1c_1,b_2c_2,\ldots,b_lc_l,a_1y_1,a_2y_2$, $\ldots,a_ly_l,x_{l+1}y_{l+1}$,
	$x_{l+2}y_{l+2},\ldots$, $x_qy_q)$ is a perfect edge elimination ordering of $G^\prime$. Therefore $G^\prime$ is a perfect elimination bipartite graph. \par
	\begin{figure}
		\begin{center}
			\begin{tikzpicture}[scale=1.20]
			\node at (0,0){\textbullet}; \node at (0,-0.2){$ x_5 $};
			\node at (0,1){\textbullet}; \node at (0,0.8){$ x_4 $};
			\node at (0,2){\textbullet}; \node at (0,1.8){$ x_3 $};
			\node at (0,3){\textbullet}; \node at (0,2.8){$ x_2 $};
			\node at (0,4){\textbullet}; \node at (0,3.8){$ x_1 $};
			
			\node at (4,0){\textbullet}; \node at (4,-0.2){$ y_5 $};
			\node at (4,1){\textbullet}; \node at (4,0.8){$ y_4 $};
			\node at (4,2){\textbullet}; \node at (4,1.8){$ y_3 $};
			\node at (4,3){\textbullet}; \node at (4,2.8){$ y_2 $};
			\node at (4,4){\textbullet}; \node at (4,3.8){$ y_1 $};
			
			\draw (0,4)--(4,4)--(0,3)--(4,3)--(0,4);
			\draw (0,4)--(4,4)--(0,3)--(4,3)--(0,4);
			\draw (0,2)--(4,1)--(0,1);
			\draw (0,2)--(4,3);
			\draw (0,2)--(4,0);
			\draw (4,2)--(0,3)--(4,1);
			\draw (0,0)--(4,0);
			
			\node at (5.5,2){\textbullet}; \node at (5.5,1.8){$ a_3 $};
			\node at (5.5,3){\textbullet}; \node at (5.5,2.8){$ a_2 $};
			\node at (5.5,4){\textbullet}; \node at (5.5,3.8){$ a_1 $};
			
			\node at (6.75,2){\textbullet}; \node at (6.75,1.8){$ b_3 $};
			\node at (6.75,3){\textbullet}; \node at (6.75,2.8){$ b_2 $};
			\node at (6.75,4){\textbullet}; \node at (6.75,3.8){$ b_1 $};
			
			\node at (8,2){\textbullet}; \node at (8,1.8){$ c_3 $};
			\node at (8,3){\textbullet}; \node at (8,2.8){$ c_2 $};
			\node at (8,4){\textbullet}; \node at (8,3.8){$ c_1 $};
			
			\draw (4,4)--(6,4)--(8,4);
			\draw (4,3)--(6,3)--(8,3);
			\draw (4,2)--(6,2)--(8,2);
			
			\node at (2.2,4.35){$ G $};
			\draw[dotted] (4.3,-0.4)--(4.3,4.6)--(-0.2,4.6)--(-0.2,-0.4)--(4.3,-0.4);		
			\end{tikzpicture}
			\caption{Transformation from graph $G$ to graph $G^\prime$}
			\label{fig:test}
		\end{center}
	\end{figure}
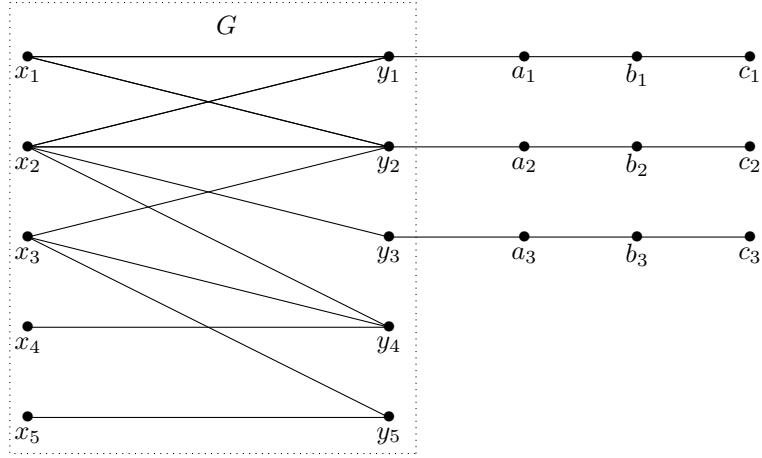
	Now we show that $G$ has a dominating set of size at most $k$ if and only if $G^\prime$ has an IDS of size at most $k+l$. Suppose $D$ be a dominating set of size at most $k$ in $G$, then it can be observed that $D \cup \{b_1,b_2,\ldots,b_l\}$ is an IDS of size at most $k+l$ in $G^\prime$. Conversely, suppose that $D^\prime$ is an IDS of size at most $p=k+l$ in $G^\prime$. It can be verified that $\vert D^\prime \cap \{b_i,c_i\} \vert \ge 1$, for all $1 \le i \le l$ and hence $\vert D^\prime \cap (B \cup C) \vert \ge l$. It can be easily seen that  $(D^\prime \setminus (A \cup B \cup C)) \cup \{y_i$ $:$ $y_i \notin D^\prime$ and  $ a_i \in D^\prime\}$ is a dominating set of size at most $k$ in $G$. 
\end{proof}

\section{Independent secure domination}
A set $S\subseteq V$ is an \textit{Independent Secure Dominating Set} (\textit{InSDS}) if $S$ is an independent set and a SDS of $G$.
In this section, the computational complexity of InSDM problem has been investigated for several graph classes. Some approximation results of this parameter also presented. 
The decision version of independent secure domination and independent domination problems are defined as follows.\\[3pt] 
\textit{INDEPENDENT SECURE DOMINATION DECISION problem (InSDM)} \\ [6pt]
\textit{Instance:} A simple, undirected graph $G=(V, E)$ and a positive integer $k$.\\
\textit{Question:} Does there exist an InSDS of size at most $ k $ in $ G $ ?\\[3pt]
\noindent
\textit{INDEPENDENT DOMINATION DECISION problem (InDM)} \\ [6pt]
\textit{Instance:} A simple, undirected graph $G=(V, E)$ and a positive integer $l$.\\
\textit{Question:} Does there exist an InDS of size at most $ l $ in $ G $ ?\\[6pt]
Garey and Johnson \cite{garey} have proved that InDM as NP-complete.
\begin{thm}\label{insdsnp}
	InSDM is NP-complete.
\end{thm}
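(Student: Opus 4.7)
The plan has two parts. First, \textit{InSDM} is in NP: given $S \subseteq V$, we verify in polynomial time that (i) $G[S]$ has no edge (in $O(|S|^{2})$ time), (ii) $N[S]=V$ (in $O(n+m)$ time), and (iii) for every $u\in V\setminus S$ there exists $v\in N(u)\cap S$ such that $(S\setminus\{v\})\cup\{u\}$ dominates $V$. The secure-domination test is polynomial because for each pair $(u,v)$ the domination check costs $O(n+m)$.

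For NP-hardness I plan to reduce from a natural NP-complete problem with an independence flavour, the most convenient candidate being the INDEPENDENT DOMINATION DECISION problem (\textit{InDM}), proved NP-complete by Garey and Johnson. Given an instance $(G=(V,E),k)$ with $V=\{v_{1},\ldots,v_{n}\}$, I construct $G'$ by attaching to each $v_{i}$ a ``cherry'' gadget $\Gamma_{i}$ on three fresh vertices $a_{i},b_{i},c_{i}$ with edges $(v_{i},a_{i})$, $(a_{i},b_{i})$, $(a_{i},c_{i})$, so that $b_{i}$ and $c_{i}$ are twin pendants of $a_{i}$. The target equivalence is that $G$ has an independent dominating set of size at most $k$ if and only if $G'$ has an InSDS of size at most $k+2n$.

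The structural core of the analysis is that in any InSDS $S'$ of $G'$ one must have $a_{i}\notin S'$ and $\{b_{i},c_{i}\}\subseteq S'$ for every $i$. If $a_{i}\in S'$, independence forces $b_{i},c_{i}\notin S'$; the only candidate defender of $b_{i}$ is then $a_{i}$, but $(S'\setminus\{a_{i}\})\cup\{b_{i}\}$ leaves $c_{i}$ (whose sole neighbour is $a_{i}$) undominated, contradicting secure domination. Hence $a_{i}\notin S'$, which forces $b_{i},c_{i}\in S'$ in order to dominate the pendants. Since $a_{i}$ is the only gadget neighbour of $v_{i}$ and is absent from $S'$, any $v_{i}\notin S'$ must be defended by a neighbour in $N_{G}(v_{i})\cap S'$, so $D:=S'\cap V$ is an independent dominating set of $G$ of size $|S'|-2n$. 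This yields the ``hardness'' direction of the reduction.

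The main obstacle will be the reverse (``forward'') direction, i.e.\ lifting an arbitrary InDS $D$ of $G$ to an InSDS of $G'$: the natural candidate $S':=D\cup\bigcup_{i}\{b_{i},c_{i}\}$ is independent and dominating in $G'$, but secure domination may fail because a defender $w\in D$ chosen for some $v\notin D$ could be the unique dominator of another $u\in V\setminus D$, leaving $u$ undominated after the move. I plan to close this gap by enriching the gadget, for instance by adding an auxiliary vertex adjacent to $v_{i}$ that serves as an in-gadget back-up dominator of $v_{i}$ (unaffected by any defence move inside $V$) while still contributing a fixed number of vertices to every InSDS, so that the $|S'|=|D|+c\cdot n$ accounting is preserved; if no such local fix suffices, I would instead reduce from 3-SAT using variable gadgets that enforce independence and clause gadgets that enforce security.
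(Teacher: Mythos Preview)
Your NP-membership argument and the decision to reduce from \textit{InDM} both coincide with the paper. The difficulty you flag in the forward direction is genuine and is exactly where the cherry gadget breaks: by forcing $a_i\notin S'$ for all $i$, you leave $v_i$ with no in-gadget neighbour in $S'$, so a putative defender $w\in D$ of some $v_i\notin D$ may strand another private $G$-neighbour of $w$. Your suggested patch---an auxiliary pendant $d_i$ on $v_i$ as backup---does not rescue the accounting: in any InSDS one then has $a_i\notin S'$, $\{b_i,c_i\}\subseteq S'$, and exactly one of $v_i,d_i$ in $S'$, so every InSDS of $G'$ has size exactly $3n$, independent of $|D|$, and the reduction collapses. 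The 3-SAT fallback is too unspecified to evaluate.

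The paper closes the gap with a \emph{simpler} gadget: attach only a path $v_i\text{--}a_i\text{--}b_i$. Since $a_i,b_i$ are adjacent and $b_i$ is a pendant, exactly one of them lies in any InSDS, contributing $n$ vertices in total. For the forward direction take $S=D\cup\{a_i:v_i\notin D\}\cup\{b_i:v_i\in D\}$. The crucial point is that $a_i$ is put into $S$ precisely when $v_i\notin D$, so every such $v_i$ carries its own backup dominator. Hence any $w\in D$ defends each neighbour $v_i\notin D$: after the swap, every other $u\in N_G(w)\setminus D$ is still dominated by its own $a_u$, and $a_w$ by $b_w$. For the converse, $a_i$ can never defend $v_i$ (the swap strands $b_i$), so any defender of $v_i\notin S^{*}$ lies in $V$, and $S^{*}\cap V$ is an InDS of $G$ of size at most $|S^{*}|-n$. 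The missing idea in your attempt is to let the gadget vertex adjacent to $v_i$ be \emph{conditionally} in the InSDS (present iff $v_i\notin D$) rather than unconditionally excluded.
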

\begin{proof} It is easy to verify a yes instance of InSDM in polynomial time. Hence InSDM is in NP. To prove NP-hardness of InSDM, we propose a reduction from InDM as follows. Given an instance $G=(V,E)$ of InDM, with $ V = \{v_1,v_2,\ldots,v_n\} $, we get an instance of InSDM by constructing a graph $G^\prime=(V^\prime, E^\prime)$ where $V^\prime=$ $V$ $\cup$ $\{a_i, b_i$ $:$ $v_i \in V\}$ and $E^\prime=E$ $\cup$ $\{(v_i,a_i),(a_i,b_i)$ $:$ $v_i \in V\}.$ 
	Clearly, $G^\prime$ can be constructed from $G$ in polynomial time. \par
	Next, we shall show that $G$ has an InDS of size at most $k$ if and only if $G^\prime$ has an InSDS of size at most $p=n+k$. Let $D$ be an InDS of size at most $k$ in $G$, then $D \cup \lbrace a_i : v_i \notin D \rbrace \cup  \lbrace b_i : v_i \in D \rbrace$ is an InSDS of size at most $n+k$ in $G^\prime$.\par
	Conversely, suppose $D^*$ is an InSDS of $G^\prime$, with $\vert D^* \vert \le n+k$. Let $A=\{a_i : v_i \in V\}$, $B=\{b_i : v_i \in V\}$ and $D^\prime= D^* \cap (A \cup B)$. Since the set $D^*$ is an InSDS of $G^\prime$, $\vert D^* \cap A \vert < n$ and $\vert D^* \cap B \vert < n$. Since every InSDS should contain either $a_i$ or $b_i$ for every $v_i \in V$ it follows that $\vert D^\prime \vert = n$. Therefore, no vertex in $D^\prime$ can defend any vertex $v_i \in V^\prime \setminus(A \cup B)$ and hence the set $D^* \setminus D^\prime$ is an InDS of size at most $k$ in $G$. 
\end{proof}
\noindent Note that in theorem \ref{insdsnp}, if the graph $ G $ is bipartite then the constructed graph $ G^\prime $ is also bipartite. Since InDM problem is NP-complete for bipartite graphs \cite{indomb}, it can imply the following theorem.
\begin{thm}
	InSDM is NP-complete for bipartite graphs.
\end{thm}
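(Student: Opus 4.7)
The plan is to reuse verbatim the polynomial-time reduction constructed in the proof of Theorem \ref{insdsnp}, and simply check the two extra properties needed: that its output stays bipartite when the input is bipartite, and that the correctness proof goes through unchanged. Membership in NP has already been established in Theorem \ref{insdsnp}, so only NP-hardness restricted to the bipartite class remains.

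First I would recall that InDM is NP-complete for bipartite graphs \cite{indomb}, and start the reduction from an arbitrary bipartite instance $G=(A,B,E)$ of InDM. Applying the construction of Theorem \ref{insdsnp} to $G$ produces $G'$ by attaching, to each $v_i\in V=A\cup B$, a pendant path $v_i{-}a_i{-}b_i$. I would then exhibit an explicit bipartition of $G'$: put into the first part $A \cup \{b_i : v_i \in A\} \cup \{a_i : v_i \in B\}$, and into the second part $B \cup \{a_i : v_i \in A\} \cup \{b_i : v_i \in B\}$. Every original edge of $E$ is already between $A$ and $B$; each new edge $(v_i,a_i)$ and $(a_i,b_i)$ is, by construction, between the two declared parts. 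So $G'$ is bipartite, and the construction is still polynomial.

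Next I would invoke the equivalence already proved in Theorem \ref{insdsnp}: $G$ has an InDS of size at most $k$ iff $G'$ has an InSDS of size at most $n+k$. That argument makes no use of any structure of $G$ beyond what is present in a bipartite graph, so it carries over without modification. Combining this equivalence with the NP-hardness of InDM on bipartite graphs gives NP-hardness of InSDM on bipartite graphs, and together with membership in NP this completes the proof.

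There is no real obstacle here beyond bookkeeping; the only thing to be careful about is to present the bipartition of $G'$ explicitly so that the reader can immediately verify bipartiteness, and to cite Theorem \ref{insdsnp} for the correctness of the reduction rather than redoing it.
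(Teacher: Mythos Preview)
Your proposal is correct and follows exactly the paper's own approach: the paper simply remarks after Theorem~\ref{insdsnp} that the construction preserves bipartiteness and then cites the NP-completeness of InDM on bipartite graphs~\cite{indomb}. Your version is in fact slightly more explicit, since you write down a concrete bipartition of $G'$ rather than merely asserting it exists.
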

\subsection{Independent secure domination for bounded tree-width graphs}
Let $ G $ be a graph, $ T $ be a tree and $v\ $ be a family of vertex sets $ V_t \subseteq V (G) $ indexed by the vertices $ t $ of $ T $ . 
The pair $ (T, v\ ) $ is called a tree-decomposition of $ G $ if it satisfies the following three conditions:
(i) $ V(G)  =  \bigcup_{t \in V(T)} V_t $,
(ii) for every edge $ e \in E(G) $ there exists a $ t \in V(T)$ such that both ends of $ e $ lie in $ V_t $,
(iii) $V_{t_1} \cap   V_{t_3}  \subseteq V_{t_2}$ whenever $ t_1$, $t_2$, $t_3 \in V(T) $ and $ t_2 $ is on the path in $ T $ from $ t_1 $ to $ t_3 $.
The width of $ (T, v\ ) $ is the number $ max\{\vert V_t \vert-1 : t\in T \}$, and the tree-width $ tw(G) $ of $ G $ is the minimum width of any tree-decomposition of $ G $. By Courcelle's Thoerem, it is well known that every graph problem that can be described by counting monadic second-order logic (CMSOL) can be solved in linear-time in graphs of bounded tree-width, given a tree decomposition as input \cite{courc}. We show that InSDM problem can be expressed in CMSOL. 
\begin{thm}[\textit{Courcelle's Theorem}](\cite{courc})\label{cmsol1}
	Let $ P $ be a graph property expressible in CMSOL and let $ k $ be
	a constant. Then, for any graph $ G $ of tree-width at most $ k $, it can be checked in
	linear-time whether $ G $ has property $ P $.
\end{thm}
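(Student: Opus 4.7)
The plan is to establish Courcelle's Theorem via the classical automata-theoretic reduction: given a CMSOL sentence $\varphi$ and a tree decomposition of $G$ of width at most $k$, I would build a bottom-up finite tree automaton $\mathcal{A}_\varphi$ whose acceptance on the decomposition is equivalent to $G \models \varphi$, and whose state set has size depending only on $\varphi$ and $k$, not on $|V(G)|$. Evaluating a fixed-size tree automaton on an $O(n)$-size tree then takes linear time, which yields the claim.

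First, I would preprocess the input into a \emph{nice} tree decomposition in which every internal node is of type leaf, introduce-vertex, forget-vertex, or join; this can be carried out in linear time without increasing the width, and produces $O(n)$ bag-nodes. For each subtree rooted at a bag $t$, I would define the \emph{type} of the induced partial subgraph with distinguished vertices $V_t$ as its equivalence class under ``satisfies the same CMSOL sub-formulas of $\varphi$ up to the quantifier depth of $\varphi$''. The central combinatorial fact (a Feferman--Vaught-style composition lemma for counting monadic second-order logic) is that this relation has only finitely many classes, bounded by a computable function of $|\varphi|$ and $k$. Using this, one shows inductively that the type at each nice-decomposition node is determined by the types of its children together with the local bag operation, which defines the transitions of $\mathcal{A}_\varphi$, with the accepting states being those types that entail $\varphi$.

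The main obstacle is the composition step: one must argue, by structural induction on $\varphi$, that equivalence up to quantifier depth $q$ is preserved under each of the four bag operations, and handle the modulo-counting quantifiers of CMSOL in addition to the usual first- and second-order quantifiers. This requires defining types relative to tuples of subsets of $V_t$ so that free set variables can be tracked through the recursion, and is where the width bound $k$ enters decisively. The resulting number of types grows non-elementarily in $|\varphi|$, but this is irrelevant for the linear-in-$n$ conclusion asserted by the theorem. Since this is a deeply studied classical result, I would in practice simply invoke \cite{courc} rather than reproduce the proof in this paper, and spend the rest of the subsection on the novel content, namely writing the InSDS property of $G$ as an explicit CMSOL sentence so that the theorem can be applied to give the linear-time algorithm for InSDM on bounded tree-width graphs.
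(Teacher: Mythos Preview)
The paper does not prove this theorem at all: it is stated as a citation of Courcelle's original result \cite{courc} and used as a black box. Your final sentence already lands on exactly what the paper does---simply invoke \cite{courc} and move on to expressing InSDM in CMSOL---so in that sense your proposal matches the paper.

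Your preceding sketch of the automata-theoretic proof (nice tree decomposition, Feferman--Vaught composition, finite-type tree automaton) is the standard route and is correct in outline, but it is extra relative to the paper, which offers no argument whatsoever for Theorem~\ref{cmsol1}. If you were actually writing this subsection, you should drop the sketch entirely and do precisely what you say at the end: quote the theorem with a citation and proceed directly to the CMSOL formula for InSDM.
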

\begin{thm}\label{cmsol2}
	Given a graph $ G $ and a positive integer $ k $, InSDM can be expressed in CMSOL.
\end{thm}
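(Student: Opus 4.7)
The plan is to construct, for each positive integer $k$, an explicit CMSOL sentence $\Phi_k$ whose models are precisely the graphs admitting an independent secure dominating set of size at most $k$. A single second-order variable $X$ will stand for the candidate InSDS, and $\Phi_k$ will have the shape $\exists X\,\big(\mathrm{Ind}(X)\wedge\mathrm{Dom}(X)\wedge\mathrm{Defend}(X)\wedge\mathrm{Size}_k(X)\big)$, where each conjunct encodes one clause in the definition of an InSDS.

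Two of the conjuncts are routine. I would take $\mathrm{Ind}(X)\equiv\forall u\,\forall v\,\big((u\in X\wedge v\in X\wedge u\ne v)\to\neg E(u,v)\big)$ and, for an arbitrary set variable $Y$, introduce the abbreviation $\mathrm{Dom}(Y)\equiv\forall w\,\big(w\in Y\,\vee\,\exists z\,(z\in Y\wedge E(w,z))\big)$; applied to $Y=X$ this gives the domination conjunct. For the size bound I would use the first-order formula $\mathrm{Size}_k(X)\equiv\neg\exists v_0\cdots v_k\,\big(\bigwedge_{0\le i\le k}v_i\in X\;\wedge\;\bigwedge_{0\le i<j\le k}v_i\ne v_j\big)$, which is legal because $k$ is a fixed parameter of the sentence and not a quantified variable.

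The only real obstacle is the secure-defender clause: for every $u\notin X$ there must exist a neighbour $v\in X$ such that the swapped set $X'=(X\setminus\{v\})\cup\{u\}$ is itself dominating. CMSOL has no set-construction operator, so I cannot simply write $\mathrm{Dom}(X')$ and let the logic build $X'$ for me. I would side-step this by inlining the membership test: the assertion that ``$z\in X'$'' is expressible by the first-order formula $\big((z\in X\wedge z\ne v)\vee z=u\big)$, and substituting this in place of every occurrence of ``$z\in Y$'' (and correspondingly for ``$w\in Y$'') inside the template $\mathrm{Dom}(Y)$ yields a legitimate CMSOL formula whose free first-order variables are $u$ and $v$. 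Denoting the result by $\mathrm{Dom}\big[(X\setminus\{v\})\cup\{u\}\big]$, the required conjunct becomes
\[
\mathrm{Defend}(X)\equiv\forall u\,\Big(u\notin X\,\to\,\exists v\,\big(v\in X\wedge E(u,v)\wedge\mathrm{Dom}\big[(X\setminus\{v\})\cup\{u\}\big]\big)\Big).
\]
Taking the conjunction of the four pieces gives the desired CMSOL sentence $\Phi_k$. I would close by noting that this is precisely the formula needed to invoke Theorem~\ref{cmsol1} in the following result, yielding a linear-time algorithm for InSDM on graphs of bounded tree-width.
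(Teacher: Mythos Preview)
Your proposal is correct and follows essentially the same approach as the paper: both quantify over a single set variable for the candidate InSDS and conjoin CMSOL subformulas expressing independence, domination, the size bound, and the defending condition. Your treatment is in fact slightly more careful than the paper's, since you explicitly justify how the swapped set $(X\setminus\{v\})\cup\{u\}$ is handled by inlining its membership predicate, whereas the paper simply writes $\mathrm{Dominating}((S\setminus\{y\})\cup\{x\})$ without comment; similarly, your formulation of independence via the adjacency relation is more direct than the paper's partition-based version.
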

\begin{proof}
	First, we present the CMSOL formula which expresses that the graph $ G $ has a dominating set of size at most $ k.$
	{\small 	$$Dominating(S)= (\vert S \vert \le k)  \land (\forall p)((\exists q)(q\in S \land adj(p,q))) \lor (p\in S)$$}
	where $ adj(p, q) $ is the binary adjacency relation which holds if and only if, $ p, q $ are two adjacent vertices of $ G.$
	$ Dominating(S) $ ensures that for every vertex $ p \in V $, either $ p\in S $ or $ p $ is adjacent to a vertex in $ S$ and the cardinality of $ S $ is at most $ k.$ 
	
	The set $ S \subseteq V$ is independent if and only if there does not exist a partition of $ S $ into two sets $ S_1 $ and $ S_2 $ such that there is an edge between a vertex in $ S_1 $ and a vertex in $ S_2 $. The CMSOL formula to express that the set $ S $ is independent as follows.
	{\small $$Independent(S)= \neg (\exists S_1, S_1 \subseteq S, (\exists e \in E, \exists u \in S_1, \exists v \in S \setminus S_1, (inc(u,e) \land inc(v,e)) ))$$}
	where $ inc(v, e) $ is the binary incidence relation which hold if and only if edge $ e $ is incident to vertex $ v $ in $ G.$
	Now, by using the above two CMSOL formulas we can express InSDM in CMSOL formula as follows.\\
	{\small 	$ InSDM(S)=Dominating(S) \land Independent(S) \land (\forall x)((x \in S) \lor$} \\{\small \hspace*{2cm}$((\exists y) (y \in S \land adj(x,y) \land Dominating((S \setminus \{y\}) \cup \{x\}) )))  $}\\
	Therefore, InSDM can be expressed in CMSOL.
\end{proof}
\noindent Now, the following result is immediate from Theorems \ref{cmsol1} and \ref{cmsol2}.
\begin{thm}
	InSDM can be solvable in linear time for bounded tree-width graphs. 
\end{thm}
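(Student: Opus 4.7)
The plan is to derive this result as an immediate corollary of the two theorems just established, namely Courcelle's Theorem (Theorem \ref{cmsol1}) and the CMSOL-expressibility of InSDM (Theorem \ref{cmsol2}). There is essentially no new content to introduce: the work of designing a CMSOL formula $InSDM(S)$ with the required semantics has already been carried out in Theorem \ref{cmsol2}, so the remaining task is only to invoke the meta-theorem correctly.

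Concretely, I would proceed as follows. First, fix an arbitrary constant $k$ bounding the tree-width, and let $G$ be any input graph of tree-width at most $k$ together with a positive integer $k'$ (the target size for the InSDS). Second, appeal to Theorem \ref{cmsol2} to obtain the CMSOL sentence $InSDM(S)$, which holds in $G$ precisely when $G$ admits an independent secure dominating set of size at most $k'$. Third, by Courcelle's Theorem (Theorem \ref{cmsol1}), checking whether such a sentence holds in $G$ can be done in time linear in $|V(G)|+|E(G)|$ once a tree decomposition of width at most $k$ is provided. Finally, I would note that a tree decomposition of optimal width for graphs of bounded tree-width can itself be computed in linear time by Bodlaender's algorithm, so the requirement on input does not affect the overall linear-time claim.

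The one subtlety worth mentioning, rather than a real obstacle, is that Theorem \ref{cmsol2} states InSDM \emph{as a decision property} parameterized by the integer $k$; since $k$ appears inside the formula as a fixed constant, strictly speaking we obtain a linear-time algorithm for each fixed $k$. This is exactly the statement the theorem makes about bounded tree-width graphs, and it matches the usual interpretation of Courcelle-based algorithms, so no additional argument is needed.

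I do not expect any genuine obstacle; the proof is a one-line combination of the previous two theorems. The only care required is to make explicit that both the tree-width bound and the target size $k$ are treated as constants, so the hidden constants in the linear-time bound may depend on them, and to point out that a suitable tree decomposition can be computed in linear time so that the assumption in Theorem \ref{cmsol1} is met.
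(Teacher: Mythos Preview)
Your proposal is correct and follows exactly the paper's approach: the paper also derives this theorem as an immediate consequence of Theorems \ref{cmsol1} and \ref{cmsol2}, with no additional argument. If anything, your version is more careful than the paper's one-line justification, since you also address how the tree decomposition is obtained and how the size parameter enters the formula.
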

\subsection{Independent secure domination for threshold graphs}
\noindent Threshold graphs have been studied with the following definition \cite{threshold1}.
\begin{definition}
	A graph $ G=(V, E) $ is called a \textit{threshold graph} if there is a real number $ T $ and a real number $ w(v) $ for every $ v \in V $ such that a set $ S \subseteq V $ is independent if and only if $ \sum_{v \in S}w(S) \le T $. 
\end{definition}
\noindent Although several characterizations defined for threshold graphs, the following is useful in solving the InSDM problem.\\
A graph $ G $ is a threshold graph if and only if it is a split graph and, for split partition $(C, I) $ of $ V $ where $ C $ is a clique and $ I $ is an independent set, there is an ordering $ (x_1, x_2, \ldots, x_p) $ of vertices of $ C $ such that $ N[x_1] \subseteq N[x_2] \subseteq \ldots \subseteq N[x_p],$ and there is an ordering $ (y_1, y_2, \ldots, y_q) $ of the vertices of $ I $ such that $ N(y_1)\supseteq N(y_2) \supseteq \ldots \supseteq N(y_q). $
\begin{thm}
	InSDM is linear time solvable for threshold graphs.
\end{thm}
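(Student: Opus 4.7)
The plan is to exploit both the split decomposition and the nested neighborhood orderings on $C$ and $I$ to reduce the minimum InSDS problem on threshold graphs to a short explicit list of candidates that can be evaluated in linear time. The standard threshold-graph recognition algorithm produces $(C,I)$ together with the orderings $(x_1,\ldots,x_p)$ and $(y_1,\ldots,y_q)$ in linear time, so all preprocessing is free.

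First I would observe that because $C$ is a clique, every independent set (and in particular every InSDS $S$) satisfies $|S\cap C|\le 1$, which produces two structural cases. If $S\cap C=\emptyset$, then $S\subseteq I$; since $I$ is independent and $N(y)\subseteq C$ for each $y\in I$, no vertex of $I\setminus S$ could be dominated from $S$, forcing $S=I$. If $S\cap C=\{x_i\}$ for some $i$, then $\{x_i\}$ already dominates all of $C$ and all of $N(x_i)\cap I$, while each vertex of $I\setminus N(x_i)$ has no remaining dominating neighbor in $S\cap I$, so $S$ must equal $S_i:=\{x_i\}\cup (I\setminus N(x_i))$. Hence the only InSDS candidates are $I$ and $S_1,\ldots,S_p$.

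Second I would decide which of these candidates are actually secure. The set $I$ is always secure whenever it is dominating: any $y\in I\cap N(x)$ defends $x\in C$ because $x$ covers all of $C$ via the clique and covers $y$ itself, so $I$ is an InSDS iff every $x\in C$ has an $I$-neighbor, which by the chain $N[x_1]\subseteq\cdots\subseteq N[x_p]$ is equivalent to $N(x_1)\cap I\neq\emptyset$. For $S_i$ I would use the same chain to show: (a) any $x_j$ with $j<i$ has no neighbor in $S_i\cap I=I\setminus N(x_i)$ (because $N(x_j)\cap I\subseteq N(x_i)\cap I$), so its only defender is $x_i$, and the swap $(S_i\setminus\{x_i\})\cup\{x_j\}$ fails to dominate $(N(x_i)\setminus N(x_j))\cap I$ unless $N(x_j)\cap I=N(x_i)\cap I$; and (b) a vertex $y\in I\cap N(x_i)$ has only $x_i$ as a defender in $S_i$, and the swap $(S_i\setminus\{x_i\})\cup\{y\}$ fails to dominate the other vertices of $I\cap N(x_i)$ unless $|I\cap N(x_i)|\le 1$. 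Together these force $|S_i|=1+q-|I\cap N(x_i)|\ge q$ whenever $S_i$ is secure, so no $S_i$ is smaller than the candidate $S=I$.

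Combining the two cases, the minimum InSDS has size exactly $q$ when $N(x_1)\cap I\neq\emptyset$ (realized by $S=I$) and exactly $q+1$ otherwise (realized by $S_1=\{x_1\}\cup I$, which is easily verified to be an InSDS since $x_1$ defends every $x_j$ via the clique and $I\cap N(x_1)=\emptyset$ eliminates the secure-domination constraints coming from $I$). All the required quantities can be read off from the orderings in linear time. The main obstacle I expect is keeping the defender case analysis airtight across the three sub-classes ($x_j$ with $j<i$, $x_j$ with $j>i$, and $y\in I\cap N(x_i)$), since only the nested orderings collapse the a~priori $\Theta(p)$-sized family $\{S_i\}$ to a single non-trivial witness; once this structural lemma is in place the linear-time algorithm and the explicit size formula follow immediately.
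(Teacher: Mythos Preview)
Your proposal is correct and reaches the same optimal set as the paper, namely $S=I$ when every clique vertex has a neighbour in $I$, and $S=I\cup\{x_1\}$ otherwise. The route, however, differs from the paper's. The paper simply \emph{proposes} this $S$, checks it is an InSDS, and then proves optimality by a short external-private-neighbour argument: any independent dominating set $D$ with $|D|<|S|$ must contain some vertex $v$ with $|epn(v,D)|\ge 2$, which already blocks secure domination. You instead give a full structural classification, showing that \emph{every} InSDS of a threshold graph is either $I$ or one of the sets $S_i=\{x_i\}\cup(I\setminus N(x_i))$, and then use the nested neighbourhood chain to bound $|S_i|\ge q$ for all secure $S_i$. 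Your approach buys a complete description of the InSDS family (not just the minimum), at the cost of a longer defender case analysis; the paper's $epn$ argument is shorter but less informative. One small point worth making explicit in your write-up: in the case $N(x_1)\cap I=\emptyset$ you should state that your case~(a) with $j=1$ rules out any secure $S_i$ of size exactly $q$, so that the claimed value $q+1$ is indeed tight.
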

\begin{proof}
	Let $ G=(V, E) $ be a connected threshold graph with split partition $ (C, I) $. If there exists a vertex $ c_0 \in C $ such that $  N(c_0)\cap I = \emptyset $, then $ S = I \cup \{c_0\}$, otherwise $ S = I $.
	Here, it is clear that $ S $ is an independent set and for every vertex $ u \in V\setminus S $, it can be seen that there exists a vertex $ v \in S $ such that $ (S \setminus \{v\}) \cup \{u\} $ is a dominating set of $ G.$ Hence $ S $ is an InSDS of $ G $ and $ \gamma_{is}(G) \le \vert S \vert. $ \par 
	Let $ D $ be any InSDS of $ G $, then we show that $ \vert D \vert$ $\ge$ $\vert S \vert.$ By contradiction, assume $ \vert D \vert$ $<$ $\vert S \vert.$ Then it follows that there exists a vertex $ v \in D $ for which $ \vert epn(v, D) \vert \ge 2 $ which implies $ D $ is not a SDS of $ G $, a contradiction. Hence $ \vert D \vert$ $\ge$ $\vert S \vert,$ i.e. $ \gamma_{is}\ge \vert S \vert. $ \par
	In a threshold graph, the split partition can be obtained in linear time \cite{threshold1}. Therefore, InSDM in threshold graphs is solvable in linear time.
\end{proof}

\subsection{APX-hardness of independent secure domination}
In this subsection, we prove that the MInSDS problem is APX-hard for graphs with maximum degree $ 5 $. In order to prove this, we use the concept of L-reduction. Let $ IP $ denotes the set of all instances of an optimization problem $ P$, $ SP(x) $ denotes the set of solutions of an instance $ x $ of problem $ P$, $ m_P(x,y)$ denotes the measure of the objective function value for $ x \in IP $ and $ y \in SP(x) $ and $ opt_P(x)$ denotes the optimal value of the objective function $ x \in IP.$ The L-reduction is defined as follows.
\begin{definition}
	Given two NP optimization problems $ F $ and $ G $ and a polynomial time transformation $ f $
	from instances of $ F $ to instances of $ G $, we say that $ f $ is an L-reduction if there are positive constants $ a $ and $ b $ such that for every instance $ x $ of $ F: $
	\begin{enumerate}
		\item $ opt_G(f(x)) \le a * opt_F(x) $.
		\item for every feasible solution $ y $ of $ f(x) $ with objective value $ m_G(f(x), y) = c_2 $ we can in polynomial time find a solution $ y^\prime $ of $ x $ with $ m_F (x, y^\prime) = c_1 $ such that $\vert  opt_F (x) - c_1 \vert \le  b * \vert opt_G(f(x)) - c_2\vert. $
	\end{enumerate}
\end{definition}
\noindent To show APX-hardness of MInSDS, we give an L-reduction from MINIMUM INDEPENDENT DOMINATING SET-$ 3 $ (MInDS-$ 3 $) which has been proved as APX-complete \cite{apxhard}. The MInDS-3 problem is to find minimum InDS of a graph $ G $ with maximum degree $ 3. $
\begin{thm}
	The MInSDS problem is APX-hard for graphs with maximum degree $ 5.$
\end{thm}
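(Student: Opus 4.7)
The plan is to give an L-reduction from MInDS-$3$, the minimum independent dominating set problem on graphs of maximum degree $3$, which is APX-complete by the cited reference. The natural candidate for the reduction is the construction already introduced in the proof of Theorem \ref{insdsnp}: given an instance $G=(V,E)$ of MInDS-$3$ with $V=\{v_1,\ldots,v_n\}$, form $G'$ by attaching to every $v_i$ a pendant path $v_i\text{-}a_i\text{-}b_i$. I would first verify the degree bound: since $G$ has maximum degree $3$, each $v_i$ gains exactly one neighbor ($a_i$) in $G'$, reaching degree at most $4$; the new vertices $a_i$ and $b_i$ have degrees $2$ and $1$, respectively. Hence $\Delta(G')\le 4\le 5$, as required.

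The core of the argument is the exact identity
\[
\gamma_{is}(G')\;=\;n+i(G).
\]
The direction $\gamma_{is}(G')\le n+i(G)$ is witnessed by the set $D\cup\{a_i : v_i\notin D\}\cup\{b_i : v_i\in D\}$ for any minimum InDS $D$ of $G$, which is exactly the construction shown in Theorem \ref{insdsnp} to be an InSDS of size $n+|D|$. For the reverse direction $\gamma_{is}(G')\ge n+i(G)$, I would invoke the converse argument of Theorem \ref{insdsnp}: any InSDS $D^*$ of $G'$ must contain exactly one vertex from each pair $\{a_i,b_i\}$ (contributing $n$ to $|D^*|$), and the residual set $D^*\setminus(A\cup B)$ is an InDS of $G$, giving $|D^*|\ge n+i(G)$.

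Once the identity is established, the L-reduction constants fall out. For condition (1), I use the elementary bound $i(G)\ge n/(\Delta(G)+1)\ge n/4$ for graphs of maximum degree $3$, which yields
\[
\mathrm{opt}(G')\;=\;n+i(G)\;\le\;4\,i(G)+i(G)\;=\;5\,\mathrm{opt}(G),
\]
so $a=5$ works. For condition (2), given any feasible InSDS $D^*$ of $G'$ with $|D^*|=c_2$, I extract the InDS $D=D^*\setminus(A\cup B)$ of $G$ with $|D|=c_1=c_2-n$; this step is polynomial. Then
\[
|\mathrm{opt}(G)-c_1|\;=\;|i(G)-(c_2-n)|\;=\;|(n+i(G))-c_2|\;=\;|\mathrm{opt}(G')-c_2|,
\]
so $b=1$ suffices.

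The expected main obstacle is not constructing the reduction but rigorously justifying the structural claim that every InSDS $D^*$ of $G'$ picks exactly one vertex from each attached pair $\{a_i,b_i\}$, and that the residual $D^*\setminus(A\cup B)$ is still independent and dominating in $G$. This requires a short case analysis: $b_i$ is a pendant, so at least one of $a_i,b_i$ must be in any dominating set; independence of $D^*$ prevents both from being present; and independence in $G'$ forces independence in $G$. A minor bookkeeping point is that, when $v_i\in D^*$, the defender of any unprotected vertex in $G$ cannot be routed through $A\cup B$, so the dominating-set property of the residual in $G$ is inherited without modification. Once this is spelled out, both L-reduction inequalities are immediate with $a=5$ and $b=1$, completing the APX-hardness proof.
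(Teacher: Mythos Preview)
Your argument is correct and in fact proves a slightly stronger statement than the paper's theorem, but it proceeds by a genuinely different construction.  The paper does not reuse the gadget of Theorem~\ref{insdsnp}; instead, to each $v_i$ it attaches \emph{two} gadgets, a pendant path $v_i\text{-}q_i\text{-}p_i$ and a ``claw'' $v_i\text{-}r_i$ with $r_i$ further adjacent to two pendants $s_i,t_i$.  This raises the degree of $v_i$ by $2$ (hence the bound $\Delta\le 5$), forces exactly three auxiliary vertices per $i$ into any InSDS (one of $\{p_i,q_i\}$ and both of $s_i,t_i$), and yields the identity $\gamma_{is}(G')=i(G)+3n$, from which the paper derives the L-reduction with $a=13$ and $b=1$.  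Your route, by contrast, raises each degree by only $1$, giving $\Delta(G')\le 4$, and the tighter identity $\gamma_{is}(G')=i(G)+n$ produces the better constant $a=5$.  The crucial structural step you highlight---that $a_i$ can never defend $v_i$ because the swap leaves $b_i$ undominated, so every $v_i\notin D^\ast$ must have a $G$-neighbour in $D^\ast\cap V$---is exactly the mechanism used (tersely) in Theorem~\ref{insdsnp}; the paper's extra $r_i,s_i,t_i$ gadget is redundant for the APX-hardness argument.  In short: both proofs are valid L-reductions from MInDS-$3$, but yours is simpler, gives a smaller $a$, and incidentally establishes APX-hardness already for $\Delta\le 4$.
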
 
\begin{proof}
	Given an instance $ G=(V,E)$ of maximum degree $ 3 $, where $ V=\{v_1,v_2,\ldots,$ $v_n\} $ of MInDS-3, we construct an instance $ G^\prime=(V^\prime, E^\prime) $ of MInSDS as follows. Let $ P=\{p_1,p_2,\ldots,p_n\} $, $Q=\{q_1,q_2,\ldots,q_n\} $, $R=\{r_1,r_2,\ldots,r_n\} $, $S=\{s_1,s_2,\ldots,s_n\} $ and  $T=\{t_1,t_2,$ $\ldots,t_n\} $. In the graph $ G^\prime,$ $ V^\prime= V$ $\cup$ $P$ $\cup$ $Q$ $\cup$ $R$ $\cup$ $S$ $\cup$ $T$ and $ E^\prime=E \cup \{(v_i,q_i),(q_i,p_i),$ $ (v_i,r_i), (r_i,s_i), (r_i,t_i) : 1 \le i \le n\} .$ Note that $ G^\prime $ is a graph with maximum degree $ 5 $. An example construction of the graph $ G^\prime $ is illustrated in Figure \ref{apx5}.
		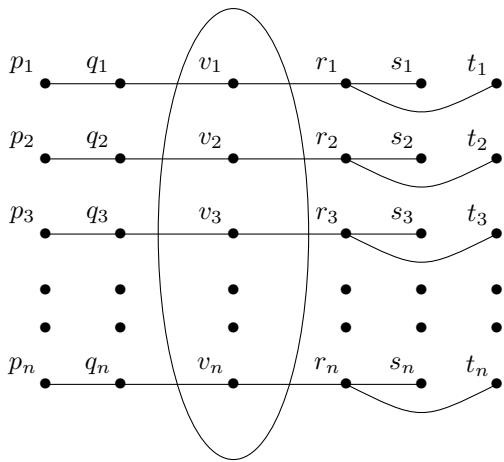
\begin{figure}
			\begin{center}
				\begin{tikzpicture}
				\draw (9,10) ellipse (1cm and 3cm);
%
				
				\node at (9,12){\textbullet}; \node at (8.7,12.25){$v_1$}; 
				\node at (9,11){\textbullet}; \node at (8.7,11.25){$v_2$}; 
				\node at (9,10){\textbullet}; \node at (8.7,10.25){$v_3$}; 
				\node at (9,9.25){\textbullet}; 
				\node at (9,8.75){\textbullet}; 
				\node at (9,8){\textbullet}; \node at (8.7,8.25){$v_n$}; 
				
				\node at (7.5,12){\textbullet}; \node at (7.2,12.25){$q_1$}; 
				\node at (7.5,11){\textbullet}; \node at (7.2,11.25){$q_2$}; 
				\node at (7.5,10){\textbullet}; \node at (7.2,10.25){$q_3$}; 
				\node at (7.5,9.25){\textbullet}; 
				\node at (7.5,8.75){\textbullet}; 
				\node at (7.5,8){\textbullet}; \node at (7.2,8.25){$q_n$};
				
				\node at (6.5,12){\textbullet}; \node at (6.2,12.25){$p_1$}; 
				\node at (6.5,11){\textbullet}; \node at (6.2,11.25){$p_2$}; 
				\node at (6.5,10){\textbullet}; \node at (6.2,10.25){$p_3$}; 
				\node at (6.5,9.25){\textbullet}; 
				\node at (6.5,8.75){\textbullet}; 
				\node at (6.5,8){\textbullet}; \node at (6.2,8.25){$p_n$};  
				
				\node at (10.5,12){\textbullet}; \node at (10.25,12.25){$r_1$};
				\node at (10.5,11){\textbullet}; \node at (10.25,11.25){$r_2$}; 
				\node at (10.5,10){\textbullet}; \node at (10.25,10.25){$r_3$}; 
				\node at (10.5,9.25){\textbullet}; 
				\node at (10.5,8.75){\textbullet}; 
				\node at (10.5,8){\textbullet}; \node at (10.25,8.25){$r_n$};

				\node at (11.5,12){\textbullet}; \node at (11.25,12.25){$s_1$};
				\node at (11.5,11){\textbullet}; \node at (11.25,11.25){$s_2$}; 
				\node at (11.5,10){\textbullet}; \node at (11.25,10.25){$s_3$}; 
				\node at (11.5,9.25){\textbullet}; 
				\node at (11.5,8.75){\textbullet}; 
				\node at (11.5,8){\textbullet}; \node at (11.25,8.25){$s_n$};

				\node at (12.5,12){\textbullet}; \node at (12.25,12.25){$t_1$};
				\node at (12.5,11){\textbullet}; \node at (12.25,11.25){$t_2$}; 
				\node at (12.5,10){\textbullet}; \node at (12.25,10.25){$t_3$}; 
				\node at (12.5,9.25){\textbullet}; 
				\node at (12.5,8.75){\textbullet}; 
				\node at (12.5,8){\textbullet}; \node at (12.25,8.25){$t_n$}; 
				
				\draw (9,12)--(7.5,12)--(6.5,12);
				\draw (9,11)--(7.5,11)--(6.5,11);
				\draw (9,10)--(7.5,10)--(6.5,10);
				\draw (9,8)--(7.5,8)--(6.5,8);								
				
				\draw (9,12)--(11.5,12);
				\draw (9,11)--(11.5,11);
				\draw (9,10)--(11.5,10);
				\draw (9,8)--(11.5,8);
				
				\draw(10.5,12)..controls(11.5,11.5)..(12.5,12); 
				\draw(10.5,11)..controls(11.5,10.5)..(12.5,11); 
				\draw(10.5,10)..controls(11.5,9.5)..(12.5,10);
				\draw(10.5,8)..controls(11.5,7.5)..(12.5,8); 				 				
				\end{tikzpicture}	
				\caption{Example construction of a graph $ G^\prime $ from $ G $} \label{apx5}
			\end{center}
		\end{figure}
First we prove the following claim.
	\begin{claim}
		If $ D^*$ is a minimum InDS of $ G$ and $ S^*$ is a minimum InSDS of $ G^\prime$ then $ \vert S^* \vert = \vert D^* \vert + 3n,$ where $ n=\vert V \vert.$
	\end{claim}
	\begin{proof}
		Suppose $ D^*$ be a minimum InDS of $ G$. Clearly $S^*= D^* \cup S \cup T \cup \{p_i : v_i \in D^* \} \cup \{q_i : v_i \notin D^*\} $ is an InSDS of $ G^\prime.$ Hence $ \vert S^* \vert \le \vert D ^* \vert + 3n.$ \par
		Next we show that $ \vert S^* \vert \ge \vert D ^* \vert + 3n.$ In any InSDS of $ G^\prime$, for each $ i$ where $1 \le i \le n$, we have to choose (i) one vertex from $ \{p_i, q_i\}$ (ii) two vertices from $ \{r_i, s_i, t_i\}.$ Hence if $ D^*$ is a minimum InDS of $ G,$ then any InSDS of $ G^\prime$ must contain at least $ 3n $ new vertices. Therefore, $ \vert S^* \vert \ge \vert D ^* \vert + 3n.$ This completes the proof of claim.
	\end{proof}
	\noindent Let $ D^*$, $ S^* $ be a minimum InDS and a minimum InSDS of $ G $ and $ G^\prime$ respectively. It is known that for any graph $G=(V,E)$ with maximum degree $ \Delta $, $ \gamma(G) \ge \frac{n}{\Delta+1}$, where $ n=\vert V \vert.$ From \cite{Haynes1}, we know that $ \gamma(G) \le i(G)$. Thus, $ \vert D^* \vert \ge \frac{n}{4}.$ From the above claim, it is evident that, $ \vert S^* \vert = \vert D ^* \vert + 3n \le \vert D ^* \vert+12\vert D ^* \vert = 13 \vert D ^* \vert.$ \par
	Now, consider an InSDS $ S^\prime $ of $ G^\prime$. Clearly, the set $ D=S^\prime \cap V$ is a InDS of $ G.$ Since every InSDS of $ G^\prime $ contains at least $ 3n $ vertices which are not part of $ G $, $ \vert D \vert  \le \vert S^\prime \vert - 3n.$ Hence, $ \vert D \vert-\vert D^* \vert  \le \vert S^\prime \vert - 3n-\vert D^*\vert=\vert S^\prime \vert -\vert S^* \vert.$ This implies that there exists an L-reduction with $ a=13 $ and $ b=1.$
\end{proof}
\subsection{Complexity difference in domination and independent secure domination}
Although independent secure domination is one of the several variants of domination problem, these two differ in computational complexity. In particular, there exist graph classes for which the decision version of the first problem is polynomial-time solvable whereas the decision version of the second problem is NP-complete and vice versa.\par
We construct a new class of graphs in which the MINIMUM INDEPENDENT SECURE DOMINATION problem can be solved trivially, whereas the decision version of the DOMINATION problem is NP-complete.
\begin{definition}[GP graph]
	A graph is GP graph if it can be constructed from a connected graph $ G=(V,E)$ where $ \vert V \vert=n$ and $ V =\{v_1,v_2,\ldots,v_n\}, $ in the following way:
	\begin{enumerate}
		\item Create $ n $ copies of path graph each with $ 3 $ vertices i.e. $ P_3$, where $ a_i,b_i$ and $c_i$ are the vertices of $ i^{th} $ copy of path graph $ P_3 $.
		\item Make each $ v_i $ adjacent to vertex $ a_i $ of $ i^{th} $ copy of $ P_3 $.
	\end{enumerate}
\end{definition} 
\begin{center}
	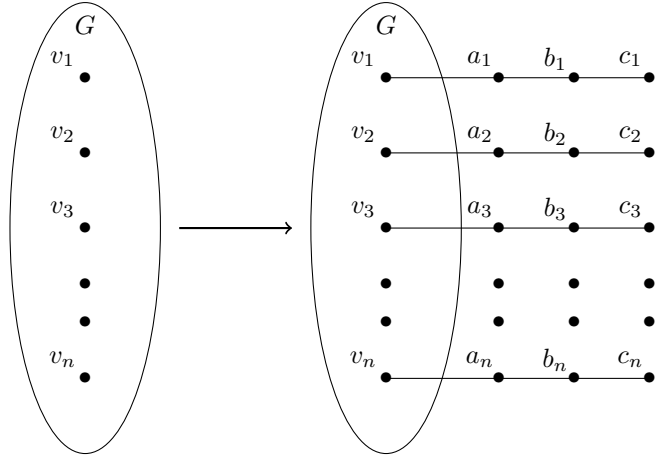
\begin{figure}
		\begin{center}
			\begin{tikzpicture}
			\draw (5,10) ellipse (1cm and 3cm);
			\node at (5,12.7){$ G $};
			\node at (9,12.7){$ G $};
			\draw (9,10) ellipse (1cm and 3cm);
			\node at (5,12){\textbullet}; \node at (4.7,12.25){$v_1$}; 
			\node at (5,11){\textbullet}; \node at (4.7,11.25){$v_2$}; 
			\node at (5,10){\textbullet}; \node at (4.7,10.25){$v_3$}; 
			\node at (5,9.25){\textbullet}; 
			\node at (5,8.75){\textbullet}; 
			\node at (5,8){\textbullet}; \node at (4.7,8.25){$v_n$}; 
			
			\draw [->,thick] (6.25,10) -- (7.75,10);
			
			\node at (9,12){\textbullet}; \node at (8.7,12.25){$v_1$}; 
			\node at (9,11){\textbullet}; \node at (8.7,11.25){$v_2$}; 
			\node at (9,10){\textbullet}; \node at (8.7,10.25){$v_3$}; 
			\node at (9,9.25){\textbullet}; 
			\node at (9,8.75){\textbullet}; 
			\node at (9,8){\textbullet}; \node at (8.7,8.25){$v_n$};

			\node at (10.5,12){\textbullet}; \node at (10.25,12.25){$a_1$};
			\node at (10.5,11){\textbullet}; \node at (10.25,11.25){$a_2$}; 
			\node at (10.5,10){\textbullet}; \node at (10.25,10.25){$a_3$}; 
			\node at (10.5,9.25){\textbullet}; 
			\node at (10.5,8.75){\textbullet}; 
			\node at (10.5,8){\textbullet}; \node at (10.25,8.25){$a_n$};

			\node at (11.5,12){\textbullet}; \node at (11.25,12.25){$b_1$};
			\node at (11.5,11){\textbullet}; \node at (11.25,11.25){$b_2$}; 
			\node at (11.5,10){\textbullet}; \node at (11.25,10.25){$b_3$}; 
			\node at (11.5,9.25){\textbullet}; 
			\node at (11.5,8.75){\textbullet}; 
			\node at (11.5,8){\textbullet}; \node at (11.25,8.25){$b_n$};

			\node at (12.5,12){\textbullet}; \node at (12.25,12.25){$c_1$};
			\node at (12.5,11){\textbullet}; \node at (12.25,11.25){$c_2$}; 
			\node at (12.5,10){\textbullet}; \node at (12.25,10.25){$c_3$}; 
			\node at (12.5,9.25){\textbullet}; 
			\node at (12.5,8.75){\textbullet}; 
			\node at (12.5,8){\textbullet}; \node at (12.25,8.25){$c_n$}; 
			
			\draw (9,12)--(12.5,12);
			\draw (9,11)--(12.5,11);
			\draw (9,10)--(12.5,10);
			\draw (9,8)--(12.5,8);
			
			\end{tikzpicture}	
			\caption{Example GP graph construction} \label{gpgraph}
		\end{center}
	\end{figure}
\end{center}
General GP graph construction is shown in figure \ref{gpgraph}.
\begin{thm}
	If $ G^\prime$ is a GP graph obtained from a graph $ G=(V,E) $ $ (\vert V \vert=n)$, then $ \gamma_{is}(G^\prime)=2n$. 
\end{thm}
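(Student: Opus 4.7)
The plan is to prove both $\gamma_{is}(G')\le 2n$ and $\gamma_{is}(G')\ge 2n$ separately, exploiting the fact that the vertex sets $T_i=\{v_i,a_i,b_i,c_i\}$ partition $V(G')$ and each $T_i$ induces a path (together with possible edges from $v_i$ into the original graph $G$). The main work is on the lower bound, which comes from a short case analysis inside each $T_i$; the upper bound is produced by an explicit construction.

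For the upper bound, I would take $S=\{a_i:1\le i\le n\}\cup\{c_i:1\le i\le n\}$, which clearly has size $2n$. Independence is immediate since no two vertices among $\{a_i\}\cup\{c_j\}$ are adjacent in $G'$ (the $a_i$'s have neighbors only in $V\cup\{b_i\}$, and the $c_i$'s have only $b_i$ as a neighbor). Domination is also immediate: every $v_i$ is dominated by $a_i$ and every $b_i$ is dominated by $a_i$ (or $c_i$). For the secure condition, I would verify the two kinds of outside vertices: a $v_i\notin S$ is defended by $a_i$, since swapping recovers $a_i$'s domination through the newly added $v_i$ and leaves the rest of $G'$ unchanged; a $b_i\notin S$ is defended by $c_i$, since the only vertex that loses its dominator is $c_i$, and $c_i$ is now adjacent to $b_i$ in the swapped set.

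For the lower bound, let $D$ be any InSDS of $G'$. The plan is to show $|D\cap T_i|\ge 2$ for every $i$, from which $|D|\ge 2n$ follows by summing over the disjoint sets $T_1,\ldots,T_n$. Fix $i$. Since $c_i$ is a pendant with unique neighbor $b_i$, domination forces $\{b_i,c_i\}\cap D\ne\emptyset$. I would split into cases. If $c_i\in D$, then $b_i\notin D$ by independence; either $a_i\in D$, giving two vertices $a_i,c_i\in D\cap T_i$, or $a_i\notin D$, in which case the only remaining neighbor of $a_i$ is $v_i$, so $v_i\in D$ is needed to dominate $a_i$, again giving two vertices $v_i,c_i\in D\cap T_i$. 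If $c_i\notin D$, then $b_i\in D$ and hence $a_i\notin D$ by independence; here I invoke the security condition, which forces $(D\setminus\{b_i\})\cup\{c_i\}$ to be dominating (since $b_i$ is the only vertex in $D$ that can defend $c_i$), and inside this swapped set $a_i$ is no longer adjacent to any member, so $v_i\in D$ is required, again giving $|D\cap T_i|\ge 2$.

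The main obstacle is the third case, where one must notice that the security requirement for the pendant $c_i$ (not just domination) is what pins down $v_i\in D$; without invoking it, one might mistakenly conclude a single vertex per tail suffices. Combining $|D|\ge 2n$ with the construction above yields $\gamma_{is}(G')=2n$.
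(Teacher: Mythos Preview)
Your argument is correct. The lower bound via the partition $T_i=\{v_i,a_i,b_i,c_i\}$ and the case split on which of $b_i,c_i$ lies in $D$ matches the paper's reasoning in spirit: the paper observes that exactly one of $b_i,c_i$ belongs to any InSDS and that this vertex cannot defend anything in $\{v_i,a_i\}$, forcing a second vertex of $T_i$ into the set. Your explicit invocation of the security condition at the pendant $c_i$ in the case $b_i\in D$ makes that step more transparent than the paper's one-line claim.

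Where you genuinely diverge is in the upper bound. The paper exhibits $S=V\cup\{b_i:1\le i\le n\}$ and asserts it is an InSDS, but this set is \emph{not independent} whenever $G$ has an edge (and $G$ is assumed connected, so this fails for all $n\ge 2$). Your choice $S=\{a_i:1\le i\le n\}\cup\{c_i:1\le i\le n\}$ sidesteps this issue entirely, since no two of these vertices are adjacent in $G'$; your verification that each $v_i$ is defended by $a_i$ and each $b_i$ by $c_i$ is straightforward and correct. So your construction not only differs from the paper's but actually repairs a gap in it.
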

\begin{proof}
	Let $ G=(V,E),$ where $V =\{v_1, v_2,\ldots,v_n\} $ be a graph. The construction of  $ G^\prime=(V^\prime, E^\prime) $ as follows. Create $ n $ copies of $ P_3 $, where $ a_i,b_i$ and $ c_i $ are the vertices of $ i^{th} $ copy of $ P_3 $, and create the edges  $ \{(v_i,a_i) : 1 \le i \le n\} $. It is clear that $ G^\prime $ is a GP graph. Let $ S=V \cup  \{b_i : 1\le i\le n\}.$ It can be observed that $ S $ is a InSDS of $ G^\prime$ of size $ 2n$ and hence $ \gamma_{is}(G^\prime) \le 2n.$ \par
	Let $ S $ be any InSDS in $ G^\prime $. Note that $ \vert S \cap \{b_i,c_i : 1\le i \le n\} \vert = n$ and these vertices cannot defend any other vertex in $ V \cup \{a_i : 1\le i \le n\}.$ Therefore, either $ v_i $ or $ a_i $, for each $ i $, where $ 1\le i\le n$ must be included in every InSDS of $ G^\prime$, and hence $ \vert S \vert \ge 2n.$ This completes the proof the theorem.
\end{proof}
\begin{lemma}\label{difflemma}
	Let $ G^\prime$ be a GP graph constructed from a graph $ G=(V,E).$ Then $ G $ has a dominating set of size at most $ k $ if and only if $ G^\prime$ has a dominating set of size at most $ k+n,$ where $ n=\vert V \vert.$
\end{lemma}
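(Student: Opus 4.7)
The plan is to handle the two directions of the equivalence separately, exploiting the fact that every pendant path $P_3$ attached to $v_i$ contributes exactly one ``forced'' vertex in any dominating set of $G'$.

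For the forward direction, I would take a dominating set $D \subseteq V$ of $G$ with $|D| \le k$ and simply enlarge it by the middle vertices of the attached paths, setting $D' = D \cup \{b_i : 1 \le i \le n\}$. A routine check shows that $D'$ dominates $G'$: every $v_i \in V$ is dominated because $D$ already dominates $G$; each $a_i$ is dominated by $b_i \in D'$; each $b_i$ is in $D'$; and each $c_i$ is dominated by its unique neighbor $b_i \in D'$. Thus $|D'| \le k + n$, as required.

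For the reverse direction, suppose $D'$ is a dominating set of $G'$ with $|D'| \le k + n$. The key observation is that each $c_i$ is a pendant vertex whose only neighbor is $b_i$, so every dominating set of $G'$ must contain at least one of $b_i$ or $c_i$ for each $i$. Since the sets $\{b_i, c_i\}$ are pairwise disjoint, this forces $|D' \cap (B \cup C)| \ge n$, where $B = \{b_1,\dots,b_n\}$ and $C = \{c_1,\dots,c_n\}$. Consequently $|D' \cap (V \cup A)| \le k$, where $A = \{a_1,\dots,a_n\}$. Now I would define $D = (D' \cap V) \cup \{v_i : a_i \in D'\}$ and argue that $D$ dominates $G$: for any $v_j \notin D$, in particular $v_j \notin D' \cap V$, the vertex $v_j$ must be dominated in $G'$ by some neighbor in $D'$; its $G'$-neighbors are its $G$-neighbors together with $a_j$; a neighbor from $V$ lies in $D$ by construction, and if instead $a_j \in D'$ then $v_j \in D$, contradicting $v_j \notin D$. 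Either way, $v_j$ is dominated by $D$ in $G$. Finally, $|D| \le |D' \cap V| + |\{i : a_i \in D'\}| \le |D' \cap (V \cup A)| \le k$.

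The mildly subtle step is the second one: one has to be careful that replacing each $a_i \in D'$ by $v_i$ never loses any domination, which rests precisely on the fact that $a_i$'s only neighbors in $G'$ are $v_i$ and $b_i$, so $v_i$ inherits all the domination duty that $a_i$ was performing toward vertices of $V$. The counting relies on the cleanly disjoint structure $V, A, B, C$ of the $G'$ vertex set and the forced $n$ vertices from $B \cup C$; no further case analysis should be needed.
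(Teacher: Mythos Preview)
Your proposal is correct and follows essentially the same approach as the paper: both directions use the same constructions (adding the $b_i$'s in the forward direction, and replacing each $a_i \in D'$ by $v_i$ in the reverse direction, using that each pair $\{b_i,c_i\}$ forces one vertex into $D'$). Your write-up is in fact more careful than the paper's in verifying the domination and the size bound in the reverse direction.
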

\begin{proof}
	Suppose $ D $ be a dominating set of $ G $ of size at most $ k,$ then it is clear that $ D \cup \{b_i : 1 \le i \le n\} $ is a dominating set of $ G^\prime $ of size at most $ n+k.$ \par
	Conversely, suppose $ D^\prime $ is a dominating set of $ G^\prime$ of size $ n+k.$ Then at least one vertex from each pair of vertices $ b_i, c_i $ must be included in $ D^\prime.$ Let $ V^* =\{v_i \in V : a_i \in D^\prime\}$ and $ D^{\prime\prime} = V^* \cup \{(D^\prime \setminus\{a_i\}) : a_i \in  D^\prime \}$. Clearly, $ D^{\prime\prime} \cap V $ is a dominating set of $ G$ of size at most $k$. Hence the lemma.
\end{proof}
\noindent The following result is well known for the DOMINATION DECISION problem.
\begin{thm}(\cite{garey})
	The DOMINATION DECISION problem is NP-complete for general graphs.
\end{thm}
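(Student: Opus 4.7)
The plan is to establish membership in NP first and then reduce a classical NP-complete problem to DOMINATION. For NP membership, given a graph $G=(V,E)$ and a candidate set $D \subseteq V$ of size at most $k$, one can verify whether $N[D]=V$ in polynomial time by scanning the closed neighborhoods of the vertices in $D$. For hardness, I would reduce from 3-SAT, following the template in Garey and Johnson.

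Given a 3-SAT instance $\varphi$ over variables $x_1,\ldots,x_n$ and clauses $C_1,\ldots,C_m$, I would construct a graph $G_\varphi$ as follows. For each variable $x_i$, create a triangle on three new vertices $T_i, F_i, u_i$, where $T_i$ and $F_i$ represent the two literals and $u_i$ is an auxiliary vertex with no other neighbors. For each clause $C_j$, create a vertex $c_j$, and add the edge $(T_i, c_j)$ whenever $x_i \in C_j$ and $(F_i, c_j)$ whenever $\neg x_i \in C_j$. Set $k=n$. The construction is clearly polynomial in the size of $\varphi$.

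Next I would argue that $\varphi$ is satisfiable iff $G_\varphi$ has a dominating set of size at most $n$. For the forward direction, from a satisfying assignment put $T_i$ into $D$ when $x_i$ is \emph{true} and $F_i$ otherwise; each triangle is dominated by its chosen literal vertex, and each clause vertex $c_j$ is adjacent to some literal vertex in $D$ because at least one literal of $C_j$ is satisfied. For the converse, the $n$ vertex-disjoint triangles force $|D|\ge n$, hence any dominating set of size $n$ picks exactly one vertex per triangle and no $c_j$. If some $u_i \in D$, replace it with $T_i$: the triangle remains dominated and, since $u_i$ had no edges outside the triangle, no previously dominated vertex is lost. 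The cleaned $D$ consists entirely of literal vertices, and reading $x_i=\mathit{true}$ exactly when $T_i \in D$ yields an assignment under which every clause vertex is dominated, i.e., every clause is satisfied.

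The main obstacle is the reverse direction: one must argue simultaneously that $|D|=n$, that $D$ contains no clause vertex $c_j$, and that $D$ can be transformed to use only literal vertices without losing domination of any $c_j$. This all follows from the tight pigeonhole over the $n$ disjoint triangles together with the observation that $u_i$ has no clause neighbors, but the bookkeeping that guarantees the replacement step preserves domination of every clause vertex is the delicate part of the proof.
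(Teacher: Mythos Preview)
Your argument is correct and is essentially the classical 3-SAT reduction due to Garey and Johnson. Note, however, that the paper does not supply its own proof of this theorem: it is stated with a citation to \cite{garey} as a known result and used only as a black box to derive NP-completeness of DOMINATION for GP graphs via Lemma~\ref{difflemma}. So there is no ``paper's proof'' to compare against; you have simply reproduced the textbook reduction that the paper invokes by reference.
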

\noindent By using above theorem and Lemma \ref{difflemma} it can be proved that DOMINATION DECISION problem is NP-hard. Hence the following theorem.
\begin{thm}
	The DOMINATION DECISION problem is NP-complete for GP graphs.
\end{thm}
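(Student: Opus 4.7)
The plan is to verify membership in NP and then use Lemma \ref{difflemma} as the engine for a polynomial-time reduction from DOMINATION on general graphs (which is NP-complete by the cited result of Garey and Johnson) to DOMINATION on GP graphs. Membership in NP is immediate: given any candidate set $D' \subseteq V(G')$, we can check in polynomial time whether $|D'| \le k+n$ and whether $N[D'] = V(G')$.

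For NP-hardness, I would take an arbitrary instance $(G, k)$ of the DOMINATION DECISION problem, where $G = (V, E)$ with $|V| = n$, and produce the instance $(G', k')$ of DOMINATION on GP graphs by setting $G'$ to be the GP graph constructed from $G$ (attach a path $a_i b_i c_i$ to each vertex $v_i$ via the edge $(v_i, a_i)$) and $k' = k + n$. The construction adds exactly $3n$ vertices and $3n$ edges, so it is clearly polynomial; moreover $G'$ is by definition a GP graph.

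Correctness of the reduction is exactly the content of Lemma \ref{difflemma}: $G$ has a dominating set of size at most $k$ if and only if $G'$ has a dominating set of size at most $k + n$. Combining this equivalence with the NP-completeness of DOMINATION on general graphs yields NP-hardness of DOMINATION on GP graphs, and together with the NP membership verified above, NP-completeness follows.

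The main (and essentially only) conceptual obstacle has already been absorbed into Lemma \ref{difflemma}, where one argues that every dominating set of $G'$ must spend exactly one vertex on each pendant path $a_i b_i c_i$ (to dominate the pendant $c_i$), and that any choice of $a_i$ in such a set can be swapped for $v_i$ without increasing the size while preserving domination in $G$. Once that bookkeeping is taken for granted, the reduction is a one-line application of the lemma, so the write-up need only spell out membership in NP, exhibit the polynomial transformation $(G, k) \mapsto (G', k+n)$, and invoke Lemma \ref{difflemma} for the equivalence of yes-instances.
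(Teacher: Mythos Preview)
Your proposal is correct and follows exactly the paper's approach: the paper's argument for this theorem is literally the single sentence that NP-hardness follows from Lemma~\ref{difflemma} together with the NP-completeness of DOMINATION on general graphs, and your write-up simply makes explicit the polynomial-time transformation $(G,k)\mapsto(G',k+n)$ and the routine NP-membership check that the paper leaves implicit.
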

We remark, however, that the two problems, domination and InSDM are not equivalent in computational complexity aspects. A good example is when the input graph is a GP graph, the domination problem is known to be NP-complete whereas the InSDM problem is trivially solvable. Thus, there is a scope to study each of the problems on its own for particular graph classes. Further, it would be interesting to obtain the borderline between tractability and intractability of independent secure domination problem. 


\begin{thebibliography}{99}
	\bibitem{bersto} A.A. Bertossi, {\it Dominating sets for split and bipartite graphs.} Information Processing Letters, \textbf{19} (1984), no. 1, pp.37-40.
	
	
	\bibitem{apxhard} M. Chleb\'ik and J. Chleb\'iko\'v, {\it The complexity of combinatorial optimization problems on d-dimensional boxes.} SIAM Journal on Discrete Mathematics, \textbf{21} (2007), no. 1, pp.158-169.
	
	\bibitem{pog} E.J. Cockayne, P.J.P. Grobler, W.R. Grundlingh, J. Munganga, and J.H. van Vuuren, \emph{Protection of a graph}, Utilitas Mathematica, \textbf{67} (2005), pp. 19-32.
	
	\bibitem{indomb} D.G. Corneil, and Y.Perl, \emph{Clustering and domination in perfect graphs}, Discrete Applied Mathematics, \textbf{9} (1984), pp. 27-39.
	
	\bibitem{courc} B. Courcelle, \emph{The monadic second-order logic of graphs. I. Recognizable sets of finite graphs}, Inform. and Comp. \textbf{85(1)} (1990) 64-75.
	
	\bibitem{dev} A.P. De Villiers, \emph{Edge criticality in secure graph domination}, Ph.D. Dissertation Stellenbosch: Stellenbosch University, (2014).
	
	\bibitem{garey} M.R. Garey, and D.S. Johnson, \emph{Computers and Intractability: A Guide to the Theory of NP-Completeness}, Freeman, New York, (1979).
	
	\bibitem{gol}  M.C. Golumbic, and C.F. Goss, \emph{Perfect elimination and chordal bipartite graphs}, Journal of Graph Theory, \textbf{2} (1978), no. 2, pp. 155-163.
	
	\bibitem{hamid}  I.S. Hamid, and S. Balamurugan, \emph{Isolate domination in graphs}, Arab Journal of Mathematical Sciences, \textbf{22} (2016), no. 2, pp. 232-241.
	
	\bibitem{Haynes1} T.W. Haynes, S.T. Hedetniemi, and P. Slater, \emph{Fundamentals of domination in graphs}, CRC Press, (1998).
	
	\bibitem{Haynes2} T.W. Haynes, S.T. Hedetniemi, and P. Slater, \emph{Domination in graphs: advanced topics}, Marcel Dekker, (1997).
	
	\bibitem{karp}  R.M. Karp, \emph{Reducibility among combinatorial problems}, Complexity of Computer Computations, (1972), pp. 85-103.
	
	\bibitem{threshold1} N.V. Mahadev, and U.N. Peled, \emph{Threshold graphs and related topics}, \textbf{56}, North Holland, (1995).
	
	\bibitem{osd} H.B. Merouane, and M. Chellali, \emph{On secure domination in graphs}, Information Processing Letters, \textbf{1150} (2015), pp. 786-790.
	
	
	\bibitem{rad}  N.J. Rad, \emph{Some notes on the isolate domination in graphs}, AKCE International Journal of Graphs and Combinatorics, \textbf{14} (2017), no. 2, pp. 112-117.
\end{thebibliography}
\end{document}